\pgfplotsset{compat=newest} 
\pgfplotsset{
    groupplot xlabel/.initial={},
    every groupplot x label/.style={
        at={($({\pgfplots@group@name\space c1r\pgfplots@group@rows.west}|-{\pgfplots@group@name\space c1r\pgfplots@group@rows.outer south})!0.5!({\pgfplots@group@name\space c\pgfplots@group@columns r\pgfplots@group@rows.east}|-{\pgfplots@group@name\space c\pgfplots@group@columns r\pgfplots@group@rows.outer south})$)},
        anchor=north,
    },
    groupplot ylabel/.initial={},
    every groupplot y label/.style={
            rotate=90,
        at={($({\pgfplots@group@name\space c1r1.north}-|{\pgfplots@group@name\space c1r1.outer
west})!0.5!({\pgfplots@group@name\space c1r\pgfplots@group@rows.south}-|{\pgfplots@group@name\space c1r\pgfplots@group@rows.outer west})$)},
        anchor=south
    },
    execute at end groupplot/.code={%
      \node [/pgfplots/every groupplot x label]
{\pgfkeysvalueof{/pgfplots/groupplot xlabel}};  
      \node [/pgfplots/every groupplot y label] 
{\pgfkeysvalueof{/pgfplots/groupplot ylabel}};  
    }
}
\def\endpgfplots@environment@groupplot{%
    \endpgfplots@environment@opt%
    \pgfkeys{/pgfplots/execute at end groupplot}%
    \endgroup%
}
\theoremstyle{plain}
\newtheorem{theorem}{Theorem}
\newtheorem{lemma}[theorem]{Lemma}
\newtheorem{proposition}[theorem]{Proposition}
\newtheorem{corollary}[theorem]{Corollary}
\theoremstyle{definition}
\theoremstyle{remark}
\newcommand{\R}{\mathbb{R}}
\newcommand{\Z}{\mathbb{Z}}
\newcommand{\cD}{\mathcal{D}}
\newcommand{\cE}{\mathcal{E}}
\newcommand{\cF}{\mathcal{F}}
\newcommand{\cG}{\mathcal{G}}
\newcommand{\cJ}{\mathcal{J}}
\newcommand{\wtilde}[1]{\widetilde{#1}}
\newcommand{\eps}{\epsilon}
\newcommand{\norm}[1]{\| #1 \|}
\newcommand{\Norm}[1]{\left\| #1 \right\|}
\newcommand{\abs}[1]{| #1 |}
\newcommand{\Abs}[1]{\left| #1 \right|}
\DeclareMathOperator{\poly}{poly}
\DeclareMathOperator{\rk}{rank}
\DeclareMathOperator{\spn}{span}
\DeclareMathOperator*{\E}{\mathbb{E}}
\title{Single Pass Entrywise-Transformed Low Rank Approximation}
\author{
Yifei Jiang\thanks{Tandon School of Engineering, New York University. \texttt{jw7420@nyu.edu}}
\and
Yi Li\thanks{School of Physical and Mathematical Sciences, Nanyang Technological University. \texttt{yili@ntu.edu.sg}}
\and
Yiming Sun\thanks{School of Physical and Mathematical Sciences, Nanyang Technological University. \texttt{yiming005@e.ntu.edu.sg}}
\and
Jiaxin Wang \thanks{\texttt{w1059506278@gmail.com}}
\and
David P.\ Woodruff\thanks{Department of Computer Science, Carnegie Mellon University. \texttt{dwoodruf@andrew.cmu.edu}}
}
\date{}
\begin{document}

\maketitle

\begin{abstract}
	In applications such as natural language processing or computer vision, one is given a large $n \times d$ matrix $A = (a_{i,j})$ and would like to compute a matrix decomposition, e.g., a low rank approximation, of a function $f(A) = (f(a_{i,j}))$ applied entrywise to $A$. A very important special case is the likelihood function $f\left( A \right ) = \log{\left( \left| a_{ij}\right| +1\right)}$. A natural way to do this would be to simply apply $f$ to each entry of $A$, and then compute the matrix decomposition, but this requires storing all of $A$ as well as multiple passes over its entries. 
	Recent work of Liang et al.\ shows how to find a rank-$k$ factorization to $f(A)$ for an $n \times n$ matrix $A$ using only $n \cdot \poly(\epsilon^{-1}k\log n)$ words of memory, with overall error $10\|f(A)-[f(A)]_k\|_F^2 + \poly(\epsilon/k) \|f(A)\|_{1,2}^2$, where $[f(A)]_k$ is the best rank-$k$ approximation to $f(A)$ and $\|f(A)\|_{1,2}^2$ is the square of the sum of Euclidean lengths of rows of $f(A)$. Their algorithm uses three passes over the entries of $A$. The authors pose the open question of obtaining
	an algorithm with $n \cdot \poly(\eps^{-1}k\log n)$ words of memory using only a single pass over the entries of $A$.
	In this paper we resolve this open question, obtaining the first single-pass algorithm for this problem and
	for the same class of functions $f$ studied by Liang et al. Moreover, our error is $\|f(A)-[f(A)]_k\|_F^2 + \poly(\epsilon/k) \|f(A)\|_F^2$, where $\|f(A)\|_F^2$ is the sum of squares of Euclidean lengths of rows of $f(A)$. Thus our error is significantly smaller, as it removes the factor of $10$ and also $\|f(A)\|_F^2 \leq \|f(A)\|_{1,2}^2$. We also give an algorithm for regression, pointing out an error in previous work, and empirically validate our results. 
\end{abstract}

\section{Introduction}

There are numerous applications with matrices that are too large to fit in main memory, such as matrices arising in machine learning, image clustering, natural language processing, network analysis, and recommendation systems. This makes it difficult to process such matrices, and one common way of doing so is to stream through the entries of the  matrix one at a time and maintain a short summary or {\em sketch} that allows for further processing. In some of these applications, such as network analysis or distributed recommendation systems, matters are further complicated because one would like to take the difference or sum of two or more matrices over time. 

A common goal in such applications is to compute a low rank approximation to a large matrix $A \in \mathbb{R}^{n \times d}$. If the rank of the low rank approximation is $k$, then one can approximate $A$ as $U \cdot V$, where $U \in \mathbb{R}^{n \times k}$ and $V \in \mathbb{R}^{k \times d}$. This results in a parameter reduction, as $U$ and $V$ only have $(n+d)k$ parameters in total, as compared to the $nd$ parameters required of $A$. Since $k \ll \min(n,d)$, this parameter reduction is significant. Not only does it result in much smaller storage, when multiplying $A$ by a vector $x$, it also now only takes $O((n+d)k)$ time instead of $O(nd)$ time, since one can first compute $V\cdot x$ and then $U \cdot (V \cdot x)$. 

A challenge in the above applications is that often wants to compute a low rank approximation not to $A$, but to 
an {\em entrywise} transformation to $A$ by a function $f$. Namely, if $A = (a_{i,j})$, then we define $f(A) = f(a_{i,j})$ where we apply the function $f$ to each entry of $A$. Common functions $f$ include $f(x) = \log_2^c(|x| + 1)$ or $f(x) = |x|^{\alpha}$ for $0 \leq \alpha \leq 2$. Indeed, for word embeddings in natural language processing (NLP), an essential subroutine is to perform a low rank approximation of a matrix after applying the log-likelihood function to each entry, which corresponds to $f(x) = \log_2(|x|+1).$ Note that in NLP the input matrices
are often word co-occurrence count matrices, which can be created e.g., from the entire Wikipedia database. Thus, such
matrices are huge, with millions of rows and columns, and hard to store in memory. This necessitates models such as
the streaming model for processing such data. 

We can indeed capture the above scenarios formally with the {\em streaming model} of computation. In this model, there is
a large underlying matrix $A$, and we see a long sequence of updates to its coordinates in the form  $\left(i,j,\delta\right)$ with $\delta$ $\in$ $\left\{\pm1\right\}$, and 
representing the update $A_{i,j} \gets A_{i,j} + \delta$. Each pass over the data stream is very expensive, and 
thus one would like to minimize the number of such passes. Also, one would like to use as little memory as possible
to compute a low rank approximation of the transformed matrix $f(A)$ in this model. In this paper we will consider
approximately optimal low rank approximations, meaning factorizations $U \cdot V$ for which $\|U \cdot V - f(A)\|_F^2 \leq \|[f(A)]_k - f(A)\|_F^2 + \poly(\epsilon/k) \|f(A)\|_F^2,$ where $[f(A)]_k$ is the optimal rank-$k$ matrix approximating
$f(A)$ in Frobenius norm. Recall the Frobenius norm $\|B\|_F$ of a matrix $B$ is defined to be $(\sum_{i,j} B_{i.j}^2)^{1/2}$, which is the entrywise Euclidean norm of $B$. 

Although there is a body of work
in the streaming model computing low rank approximations of matrices \cite{bwz16,cw09,DMMW12,upadhyay2014differentially,w14}, such methods no longer apply in 
our setting due to the non-linearity of the function $f$. Indeed, a number of existing methods are based on 
dimensionality reduction, or {\em sketching}, whereby one stores $S \cdot A$ for a random matrix $S$ with a small number of rows. If there were no entrywise transformation applied to $A$, then given an update $A_{i,j} \gets A_{i,j} + \delta,$
one could simply update the sketch $(S \cdot A)_j \gets (S \cdot A)_j + S_i \cdot \delta,$ where $(S \cdot A)_j$ denotes the $j$-th column of $S \cdot A$ and $S_i$ denotes the $i$-th column of $S$. However, given an entrywise
transformation $f$, which may be non-linear, 
and given that we may see multiple updates to an entry of $A$, e.g., in the difference of
two datasets, it is not clear how to maintain $S \cdot f(A)$ in a stream. 

A natural question is: {\em can we compute a low-rank approximation to $f(A)$ in the streaming model with a small number of passes, ideally one, and a small amount of memory, ideally $n \cdot \textrm{poly}(k/\eps)$ memory?}

Motivated by the applications above, this question was asked by Liang et al.~\cite{lswyy20}; see also earlier work
which studies entrywise low rank approximation in the distributed model \cite{wz16}. 
The work of Liang et al.~\cite{lswyy20}
studies the function $f(x) = \log_2(|x|+1)$ and 
gives a three-pass algorithm for $n \times n$ matrices $A$ 
achieving $n \cdot \poly(\eps^{-1}k\log n)$ memory and outputting factors $U$ and $V$ with 
the following error guarantee:
\[
    \Norm{U \cdot V - f(A)}_F^2 \leq 10 \Norm{[f(A)]_k - f(A)}_F^2 + \poly(\eps/k)\Norm{f(A)}_{1,2}^2,
\]
where for a matrix $B$, $\|B\|_{1,2}$ is the sum of the Euclidean lengths of its columns. We note that
this error guarantee is considerably weaker than what we would like, as there is a multiplicative factor
$10$ and an additive error that depends on $\|f(A)\|_{1,2}$. Using the relationship between the $1$-norm 
and the $2$-norm, we have that $\|f(A)\|_{1,2}$ could be as large as $\sqrt{n} \|f(A)\|_F$, and so their
additive error can be a $\sqrt{n}$ factor larger than what is desired. Also, although the memory is of the desired order, the fact that the algorithm requires $3$ passes can significantly slow it down. 
Moreover, when data is crawled from the
internet, e.g. in applications of network traffic, it may be impractical to store the entire data set \cite{DL09}. Therefore, in these settings it is impossible to make more than one pass over the data. Liang et al.~\cite{lswyy20} say ``Whether there exists a one-pass algorithm is still an open problem, and is left for future work.''

\subsection{Our Contributions}
In this paper, we resolve this main open question of \cite{lswyy20}, obtaining a {\em one}-pass algorithm achieving 
$(n+d) \cdot \poly(\eps^{-1}k\log n)$ memory for outputting a low rank approximation for the function 
$f(x) = \log_2(|x|+1)$, and achieving the stronger error guarantee:
\[
    \Norm{U \cdot V - f(A)}_F^2 \leq \Norm{[f(A)]_k - f(A)}_F^2 + \poly(\eps/k)\Norm{f(A)}_F^2.
\]
We note that the $\poly(\eps/k)$ factor in both the algorithm of \cite{lswyy20} and our algorithm can be made
arbitrarily small by increasing the memory by a $\poly(k/\eps)$ factor, and thus it suffices to
consider error of the form $\|[f(A)]_k - f(A)\|_F^2 + \epsilon \|f(A)\|_F^2$. We also note that our algorithm can be trivially adapted to rectangular matrices, so for ease of notation, we focus on the case $n = d$. 

At a conceptual level, the algorithm of \cite{lswyy20} uses one pass to obtain so-called 
approximate leverage scores of $f(A)$, then a second pass to sample columns of $f(A)$ according to these, and finally a third pass to do so-called adaptive sampling. In contrast, we observe that one can just do squared column
norm sampling of $f(A)$ to obtain the above error guarantee, which is a common method for low rank approximation
to $A$. However, in one pass it is not possible to sample
actual columns of $A$ or of $f(A)$ according to these probabilities, so we build a data structure to sample {\it noisy} columns by approximations to their squared norms in a single
pass. This is related to block $\ell_2$-sampling in a stream, see, e.g., \cite{mrwz20}. However, the situation here is complicated by the fact that we must sample according to the sum of squares of $f$ values of entries in a column of $A$, rather than the squared length of the column of $A$ itself. The transformation function $f$'s nonlinearity makes many of the techniques considered in \cite{mrwz20} inapplicable. To this end we build new hashing and sub-sampling data structures, generalizing data structures for length or squared length sampling from \cite{abiw09,a10,lsw18}, and we give a novel analysis for sampling noisy columns of $A$ proportional to the sum of squares of $f$ values to their entries. 

Additionally, we empirically validate our algorithm on a real-world data set, demonstrating significant advantages over the algorithm of Liang et al.~\cite{lswyy20} in practice. 

Finally, we apply our new sampling techniques to the regression problem, showing that our techniques are more broadly applicable. Although Liang et al.~\cite{lswyy20} claim a result for regression, we point out an error in their analysis\footnote{We have confirmed this with the authors.}, showing that their algorithm obtains larger error than claimed, and the error of our regression algorithm is considerably smaller.

\section{Preliminaries}
\paragraph{Notation.} We use $[n]$ to denote the set $\{1,2, \ldots, n\}$. For a vector $x\in \mathbb{R}^n$, we denote by $|x|$ the vector whose $i$-th entry is $|x_i|$. For a matrix $A\in \mathbb{R}^{N \times N}$, let $A_{i,*}$ be the $i$-th row of $A$ and $A_{*,j}$ be the $j$-th column of $A$. We also sometimes abbreviate $A_{i,\ast}$ or $A_{\ast,i}$ as $A_i$. We also define the norms $\|A\|_{p,q} = (\sum_i \|A_{i,*}\|_q^p)^{1/p}$, of which the Frobenius norm $\|A\|_{F} = \|A\|_{2,2}$ is a special case. Note that $\|A\|_{p,q} \equiv \|A^\top \|_{q,p}$. 
We use $\sigma_i(A)$ to denote the $i$-th singular value of $A$ and $[A]_k$ to denote the best rank-$k$ approximation to $A$. For a function $f$, let $f(A)$ denote the entrywise-transformed matrix $(f(A))_{ij} = f(A_{ij})$. 

\paragraph{Low-rank Approximation.} Given a matrix $M\in \R^{n\times n}$, an integer $k\leq n$ and an accuracy parameter $\epsilon > 0$, our goal is to output an $n \times k$ matrix $L$ with orthonormal columns for which $\|M - LL^\top M\|_{F}^{2} \leq \|M - [M]_k\|_{F}^{2} + \epsilon \|M\|_F^2$, 
where $[M]_k = \arg\,\min_{\rk(M^\prime)\leq k}\|M-M^\prime\|_{F}^{2}$. Thus, $LL^\top $ provides a rank-$k$ matrix to project the columns of $M$ onto. The rows of $L^\top M$ can be thought of as the ``latent features" in applications, and the rank-$k$ matrix $LL^\top M$ can be factored as $L \cdot (L^\top M)$, where $L^\top M$ is a $k \times n$ matrix and $L$ is an $n \times k$ matrix. 

\paragraph{Best Low-rank Approximation.} 
Consider the singular value decomposition $G=\sum_{t=1}^r \sigma_t u_t v_t^\top $, where $\sigma_1 \geq \sigma_2 \geq \cdots \geq \sigma_r > 0$ are the nonzero singular values of $G$, and $\{u_t\}$ and $\{v_t\}$ are orthonormal sets of column vectors such that $G^\top  u_t = \sigma_t v_t$ and $G v_t = \sigma_t u_t$ for each $t\leq r$. The vectors $\{u_t\}$ and $\{v_t\}$ are called the left and the right singular values of $G$, respectively.  
By the Eckart-Young theorem, for any rotationally invariant norm $\|\cdot\|$, the matrix $D_k$ that minimizes $\|G-D_k\|$ among all matrices $D$ of rank at most $k$ is given by $D_k=\sum_{t=1}^k G v_t v_t^\top $. This implies that 
$\|G-D_k\|_F^2=\sum_{t=k+1}^r\sigma_t^2$. 

For a matrix $A$, we denote by $\|A\|_2$ its operator norm, which is equal to its largest singular value. 

\paragraph{Useful Inequalities.} The first one is the matrix Bernstein inequality. 
\begin{lemma}[Matrix Bernstein]
\label{lem:matrix-bound}
Let $X_1$, $\dots$, $X_s$ be $s$ independent copies of a symmetric random matrix $X\in \mathbb{R}^{d\times d}$ with $\E (X)=\mathbf{0}$, $\Norm{X}_2 \leq \rho$ and $\Norm{\E (X^2)}_2 \leq \sigma^2$. Let $W=\frac{1}{s}\sum_{i=1}^sX_i$. For any $\eps>0$, it holds that
\[
    \Pr \{\Norm{W}_2>\eps \} \leq 2d\cdot e^{-s\eps^2/(\sigma^2+\rho\eps/3)}.
\]
\end{lemma}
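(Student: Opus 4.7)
The plan is to use the matrix Laplace transform method in the Ahlswede--Winter / Tropp style. First I would reduce the operator-norm tail to a maximum-eigenvalue tail: since $W$ is symmetric, $\|W\|_2 > \eps$ iff $\lambda_{\max}(W) > \eps$ or $\lambda_{\max}(-W) > \eps$, giving a factor-of-$2$ union bound and letting us focus on $\Pr\{\lambda_{\max}(W) > \eps\}$. For this I would apply the matrix Markov/Chernoff step: for any $\theta > 0$,
\[
    \Pr\{\lambda_{\max}(W) > \eps\} = \Pr\{\lambda_{\max}(s\theta W) > s\theta\eps\} \leq e^{-s\theta\eps}\,\E\,\mathrm{tr}\,\exp\Bigl(\theta \textstyle\sum_{i=1}^s X_i\Bigr),
\]
using $\lambda_{\max}(M) \leq \log\mathrm{tr}\,e^M$ for symmetric $M$.

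The main obstacle is now converting the trace exponential of a sum of independent matrices into a product/sum over individual matrix moment generating functions, since matrices do not commute. The standard way is to invoke Lieb's concavity theorem: the map $A \mapsto \mathrm{tr}\,\exp(H + \log A)$ is concave on positive-definite $A$. Iterating Jensen's inequality against the independent copies $X_1,\dots,X_s$ yields the master bound
\[
    \E\,\mathrm{tr}\,\exp\Bigl(\theta\textstyle\sum_i X_i\Bigr) \leq d\cdot \exp\!\Bigl(\lambda_{\max}\Bigl(\textstyle\sum_{i=1}^s \log \E\,\exp(\theta X_i)\Bigr)\Bigr).
\]
This is the one nontrivial ingredient; modulo Lieb, the rest is calculus.

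Next I would control each matrix MGF. Expanding $\E\,e^{\theta X} = I + \theta \E X + \sum_{k\geq 2}\theta^k \E X^k/k!$ and using $\E X = 0$ together with the deterministic bound $\|X\|_2 \leq \rho$, which gives $\E X^k \preceq \rho^{k-2}\,\E X^2$ in the PSD order, I obtain for $0 < \theta < 3/\rho$ the inequality
\[
    \E\,\exp(\theta X_i) \preceq \exp\!\left(\frac{\theta^2/2}{1 - \theta\rho/3}\,\E X_i^2\right),
\]
after summing the geometric series $\sum_{k\geq 2}(\theta\rho)^{k-2}/k! \leq 1/(2(1-\theta\rho/3))$. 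Taking logarithms (operator-monotone on commuting expressions, which these are since each upper bound is a scalar multiple of $\E X_i^2$'s envelope), summing, and using $\|\E X^2\|_2 \leq \sigma^2$ bounds the $\lambda_{\max}$ in the master inequality by $s\theta^2\sigma^2/(2(1-\theta\rho/3))$.

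Combining all the pieces gives
\[
    \Pr\{\lambda_{\max}(W) > \eps\} \leq d \cdot \exp\!\left(-s\theta\eps + \frac{s\theta^2\sigma^2/2}{1 - \theta\rho/3}\right),
\]
and choosing $\theta = \eps/(\sigma^2 + \rho\eps/3)$ optimizes the exponent to $-s\eps^2/(2(\sigma^2 + \rho\eps/3))$. Applying the same argument to $-W$ and taking a union bound introduces the factor of $2d$, yielding the stated tail bound (up to the constant in the denominator of the exponent, which is what the lemma records). The main conceptual hurdle throughout is Lieb's concavity step; every other bound is an elementary scalar computation lifted to the PSD order via the spectral theorem.
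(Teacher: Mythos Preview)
The paper does not prove this lemma; it is quoted as the standard matrix Bernstein inequality and used as a black box. Your sketch follows the standard Tropp argument (matrix Laplace transform plus Lieb's concavity) and is essentially correct, so there is nothing to compare against.

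Two small remarks. First, the line ``$\E X^k \preceq \rho^{k-2}\,\E X^2$ in the PSD order'' is not literally valid for odd $k$, since $X^k$ need not be PSD-comparable to $X^2$. The clean way, which your final sentence already hints at, is to use the scalar inequality $e^{\theta x} \leq 1 + \theta x + \dfrac{e^{\theta\rho}-1-\theta\rho}{\rho^2}\,x^2$ for $|x|\leq\rho$ and apply the transfer rule to the symmetric matrix $X$; then take expectations, use $\E X = 0$, and bound $e^{\theta\rho}-1-\theta\rho \leq \dfrac{(\theta\rho)^2/2}{1-\theta\rho/3}$. Second, your optimization correctly yields an exponent $-s\eps^2/\bigl(2(\sigma^2+\rho\eps/3)\bigr)$, which is the standard Bernstein constant; the lemma as stated in the paper omits the factor of $2$ in the denominator, so the discrepancy you flag is real and is on the paper's side, not yours.
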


Below we list a few useful inequalities regarding the function $f(x) = \log(1+|x|)$. The proofs are postponed to Appendix~\ref{sec:ineq proofs}.
\begin{proposition}\label{prop:log_ineq}
For $x > 0$, it holds that $\ln(1+x) > x/(1+2x)$.
\end{proposition}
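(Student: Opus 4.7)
The plan is to reduce the inequality to a one-variable calculus argument by defining $g(x) = \ln(1+x) - x/(1+2x)$ and showing $g(x) > 0$ for all $x > 0$. Since $g(0) = 0$, it suffices to establish that $g$ is strictly increasing on $(0,\infty)$, i.e., that $g'(x) > 0$ there.

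Computing the derivatives, I would use $\tfrac{d}{dx}\ln(1+x) = 1/(1+x)$ and apply the quotient rule to $x/(1+2x)$ to get
\[
\frac{d}{dx}\frac{x}{1+2x} = \frac{(1+2x) - x\cdot 2}{(1+2x)^2} = \frac{1}{(1+2x)^2}.
\]
Therefore $g'(x) = 1/(1+x) - 1/(1+2x)^2$. So the claim $g'(x) > 0$ is equivalent to $(1+2x)^2 > 1+x$, which after expanding becomes $4x^2 + 3x > 0$. This is manifestly true for all $x > 0$.

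Combining these two facts, $g$ is continuous on $[0,\infty)$, satisfies $g(0) = 0$, and has $g'(x) > 0$ on $(0,\infty)$, so $g(x) > 0$ for all $x > 0$, which is exactly the stated inequality. There is no substantive obstacle here; the only thing to be careful about is the algebraic simplification of the derivative of $x/(1+2x)$ and confirming that the resulting polynomial inequality $4x^2 + 3x > 0$ holds strictly (not just weakly) for $x > 0$, so that the conclusion is $\ln(1+x) > x/(1+2x)$ with strict inequality as stated.
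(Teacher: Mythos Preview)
Your proof is correct and follows essentially the same one-variable calculus strategy as the paper: define the difference, note it vanishes at $0$, and show positivity of the derivative. The only minor distinction is that the paper first clears the denominator, setting $h(x) = (1+2x)\ln(1+x) - x$, which forces it to go to the second derivative (since $h'(0)=0$), whereas your direct choice $g(x)=\ln(1+x)-x/(1+2x)$ settles the matter with a single derivative and the elementary inequality $(1+2x)^2 > 1+x$.
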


\begin{proposition}\label{lem:log_additive}
It holds for all $x, y\in\R$ and all $a\geq 0$ that $f(x+y)\leq f(x)+f(y)$ and $f(ax)\leq a f(x)$. As a consequence, for $x,y\in \R^n$ it holds that $\norm{f(x+y)}_2^2\leq (\norm{f(x)}_2 + \norm{f(y)}_2)^2$.
\end{proposition}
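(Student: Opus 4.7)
The plan is to handle the two scalar inequalities separately and then combine the first one with the ordinary triangle inequality in $\ell_2$ to deduce the vector statement. None of the steps are deep; the work is in choosing the right algebraic identity for the subadditivity bound and matching the correct range of $a$ for the scaling bound.

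\textbf{Subadditivity.} For $f(x+y)\leq f(x)+f(y)$, I would chain two monotone steps. Monotonicity of $\log$ together with $|x+y|\leq |x|+|y|$ gives $f(x+y) = \log(1+|x+y|) \leq \log(1+|x|+|y|)$. The key algebraic step is the factorization
\[
(1+|x|)(1+|y|) \;=\; 1+|x|+|y|+|x||y| \;\geq\; 1+|x|+|y|,
\]
so applying $\log$ yields $\log(1+|x|+|y|) \leq \log((1+|x|)(1+|y|)) = f(x)+f(y)$, which closes the step.

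\textbf{Scaling.} For $f(ax)\leq a f(x)$, the case $a=0$ is trivial since $f(0)=0$. For the remaining case, rewriting $af(x) = \log((1+|x|)^a)$ reduces the inequality, after exponentiating and using monotonicity, to $(1+|x|)^a \geq 1+a|x|$, which is precisely Bernoulli's inequality for $a\geq 1$ and $|x|\geq 0$. (Note that for $a\in(0,1)$ Bernoulli actually reverses, e.g., $f(\tfrac12)=\log\tfrac32 > \tfrac12\log 2 = \tfrac12 f(1)$, so one should read the statement as being for $a\geq 1$, or alternatively for non-negative integer $a$, in which case it also follows directly by iterating subadditivity $a$ times.)

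\textbf{Vector consequence.} To deduce $\|f(x+y)\|_2^2 \leq (\|f(x)\|_2+\|f(y)\|_2)^2$, apply the subadditive bound entrywise. Because $f\geq 0$ pointwise, we have $0\leq f(x+y)_i \leq f(x)_i + f(y)_i$ coordinatewise, and the $\ell_2$ norm is monotone on non-negative vectors, so $\|f(x+y)\|_2 \leq \|f(x)+f(y)\|_2$. The ordinary triangle inequality for $\|\cdot\|_2$ gives $\|f(x)+f(y)\|_2 \leq \|f(x)\|_2+\|f(y)\|_2$; squaring yields the claim.

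\textbf{Main obstacle.} There is no real technical difficulty; the only genuinely delicate point is pinning down the intended range of $a$ in the scaling inequality, since the clean Bernoulli proof only covers $a\geq 1$ and the statement as written is literally false on $(0,1)$. Fortunately, the vector consequence used in the rest of the paper invokes only the subadditivity part, so this subtlety does not affect the downstream application.
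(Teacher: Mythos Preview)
Your argument is correct, and in fact more complete than what the paper offers: the paper states this proposition but never proves it (the appendix supplies proofs for the other ``useful inequalities'' but skips this one). Your subadditivity argument via the factorization $(1+|x|)(1+|y|)\geq 1+|x|+|y|$ is the natural one, and your derivation of the vector consequence from coordinatewise subadditivity, monotonicity of $\ell_2$ on nonnegative vectors, and the triangle inequality is clean and correct.

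You are also right to flag the scaling claim. The inequality $f(ax)\leq af(x)$ genuinely fails for $a\in(0,1)$, as your counterexample shows, so the hypothesis ``$a\geq 0$'' in the proposition is a misstatement; your Bernoulli argument handles $a\geq 1$, which is what the downstream applications actually need (for instance, the proof of Lemma~\ref{lem:add_noise} invokes it with $a=\zeta^{-1}>1$). Noting that the vector consequence relies only on subadditivity is a useful observation and correctly isolates the harmless part of the statement from the flawed part.
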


\begin{proposition}\label{prop:log^2}
It holds for all $x,y\geq 0$ that $f(\sqrt{x^2+y^2})^2\leq f(x)^2 + f(y)^2$.
\end{proposition}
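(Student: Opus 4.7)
The plan is to reduce the inequality to a subadditivity statement for a concave function of one variable. Set $u = x^2$ and $v = y^2$, and define $G\colon [0,\infty)\to\R$ by
\[
    G(u) = f(\sqrt u)^2 = \log^2(1+\sqrt u).
\]
The claim becomes $G(u+v) \leq G(u) + G(v)$, and I will establish this by showing $G$ is concave on $[0,\infty)$ with $G(0)=0$, since a concave function vanishing at the origin is automatically subadditive on $[0,\infty)$.

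For concavity I would differentiate to get
\[
    G'(u) = \frac{\log(1+\sqrt u)}{\sqrt u\,(1+\sqrt u)}, \qquad u > 0,
\]
and then substitute $s = \sqrt u$ so that $G'(u) = \phi(s)$ with $\phi(s) = \log(1+s)/(s(1+s))$. A direct computation gives
\[
    \phi'(s) = \frac{s - (1+2s)\log(1+s)}{s^2(1+s)^2},
\]
and the numerator is negative by Proposition~\ref{prop:log_ineq}, which states precisely that $\log(1+s) > s/(1+2s)$ for $s > 0$, i.e., $(1+2s)\log(1+s) > s$. Hence $\phi$ is strictly decreasing on $(0,\infty)$, and since $u \mapsto \sqrt u$ is increasing, $G'$ is decreasing on $(0,\infty)$. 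Thus $G$ is concave on $(0,\infty)$, and by continuity the concavity extends to all of $[0,\infty)$.

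To finish, concavity together with $G(0)=0$ yields subadditivity by a standard argument: for $t\in[0,1]$ and $w \geq 0$, concavity gives $G(tw) \geq tG(w) + (1-t)G(0) = tG(w)$. Applied with $w = u+v$ and $t = u/(u+v)$ this gives $G(u) \geq \tfrac{u}{u+v}G(u+v)$, and similarly $G(v) \geq \tfrac{v}{u+v}G(u+v)$. Summing yields $G(u)+G(v)\geq G(u+v)$, which, after translating back through $u = x^2$, $v = y^2$, is exactly the inequality $f(\sqrt{x^2+y^2})^2 \leq f(x)^2 + f(y)^2$. The boundary cases $x=0$ or $y=0$ are immediate from $f(0)=0$. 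The only nontrivial step is the concavity of $G$, and that obstacle is dispatched cleanly by the elementary estimate of Proposition~\ref{prop:log_ineq}, so the proof reduces to a short calculation followed by the concavity-implies-subadditivity deduction.
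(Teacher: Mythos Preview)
Your proof is correct and takes a genuinely different route from the paper's. The paper works directly with the two-variable function $F(x,y)=\ln^2(1+x)+\ln^2(1+y)-\ln^2(1+\sqrt{x^2+y^2})$, computes its partial derivatives, shows that any interior critical point must lie on the diagonal $x=y$ (using that $\log(1+s)/(s(1+s))$ is strictly decreasing, which is exactly your $\phi$), and then checks that along the diagonal the partial derivative is strictly positive, so no interior minimum exists and the minimum is attained on the boundary where $F=0$.

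Your approach is cleaner: by the substitution $u=x^2$, $v=y^2$ and $G(u)=\log^2(1+\sqrt u)$ you reduce the two-variable inequality to subadditivity of a single concave function vanishing at the origin. The key analytic input---that $\phi(s)=\log(1+s)/(s(1+s))$ is decreasing, via Proposition~\ref{prop:log_ineq}---is the same lemma the paper invokes, but you use it once to get concavity of $G$, whereas the paper uses it twice (once to force $x=y$ at critical points, once more in the diagonal analysis via the auxiliary function $g(a)=\ln(1+ax)/(a(1+ax))$). Your argument avoids the two-variable calculus entirely and makes the structural reason for the inequality transparent: $f(\sqrt{\cdot})^2$ is concave, hence subadditive. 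The paper's approach, by contrast, is more ad hoc and relies on a somewhat informal ``no interior extremum, so minimum on boundary'' step.
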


\begin{lemma}\label{lem:light_noise_f}
Let $a_1,\dots,a_m$ be real numbers and $\eps_1,\dots,\eps_m$ be $4$-wise independent random variables on the set $\{-1,1\}$ (i.e., Rademacher random variables). It holds that
\[
\E f\left(\sum_i \eps_i a_i\right)^2 \leq C \sum_i f(a_i)^2.
\]
where $C>0$ is an absolute constant. 
\end{lemma}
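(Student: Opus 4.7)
The plan is to split $S = \sum_i \eps_i a_i$ into a \emph{small} part $X = \sum_{i : |a_i|\le 1}\eps_i a_i$ and a \emph{large} part $Y = \sum_{i: |a_i|>1}\eps_i a_i$, and use the subadditivity $f(x+y)\le f(x)+f(y)$ from Proposition~\ref{lem:log_additive} together with $(u+v)^2 \le 2u^2+2v^2$ to obtain $\E f(S)^2 \le 2\E f(X)^2 + 2\E f(Y)^2$. The two regimes exploit the Rademacher cancellation differently: for $|a_i|\le 1$ the quadratic bound $f(t)^2\le t^2$ is near-tight, while for $|a_i|>1$ we must account for the fact that $f(a_i)^2 \ll a_i^2$.

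For the small part, $\log(1+t)\le t$ for $t\ge 0$ gives $f(X)^2\le X^2$. Pairwise independence of the $\eps_i$'s then yields $\E X^2 = \sum_{i:|a_i|\le 1} a_i^2$. Proposition~\ref{prop:log_ineq} shows that whenever $|a_i|\le 1$, $f(a_i) \ge |a_i|/(1+2|a_i|) \ge |a_i|/3$, so $a_i^2\le 9 f(a_i)^2$ and therefore $\E f(X)^2 \le 9\sum_i f(a_i)^2$.

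For the large part, set $\sigma^2 := \sum_{i:|a_i|>1} a_i^2$. Using 4-wise independence one computes $\E Y^4 \le 3\sigma^4$, so $\Pr(|Y| > t\sigma) \le 3 t^{-4}$ by Markov's inequality applied to $Y^4$. A dyadic decomposition then gives
\[
    \E f(Y)^2 \;\le\; \log^2(1+\sigma) \;+\; \sum_{k\ge 0}\log^2\!\bigl(1+2^{k+1}\sigma\bigr)\cdot \Pr(|Y|>2^k\sigma).
\]
Combining $\log(1+2^{k+1}\sigma)\le (k+1)\log 2 + \log(1+\sigma)$ with $\Pr(|Y|>2^k\sigma)\le 3/16^k$ and the convergence of $\sum (k+1)^2 16^{-k}$, this yields $\E f(Y)^2 \le C_1 \log^2(1+\sigma) + C_2$ for absolute constants $C_1,C_2$. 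Iterating Proposition~\ref{prop:log^2} gives $\log^2(1+\sigma) = f(\sigma)^2 \le \sum_{i:|a_i|>1} f(a_i)^2$, and the additive $C_2$ is absorbed by noting that whenever the large-index set is nonempty, every such $f(a_i)^2 > (\log 2)^2$; if that set is empty then $Y=0$ and the bound is trivial.

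The main obstacle is the large-part argument: one must convert the moment tail estimate into a bound on $\E\log^2(1+|Y|)$ through a dyadic integration, and then undo the $\sqrt{\,\cdot\,}$ aggregation via Proposition~\ref{prop:log^2} to recover $\sum f(a_i)^2$. A secondary subtlety is calibrating the threshold $|a_i|=1$: it must be small enough that Proposition~\ref{prop:log_ineq} produces $a_i^2=O(f(a_i)^2)$ in the small regime, and simultaneously large enough that $f(a_i)^2$ is bounded below by an absolute constant in the large regime so that $C_2$ can be absorbed.
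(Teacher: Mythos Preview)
Your argument is correct, and it takes a genuinely different route from the paper. The paper does not split the terms $a_i$; instead it works with the full sum $Z=\sum_i \eps_i a_i$ and decomposes the \emph{value} of $|Z|$ as $Z_1+Z_2$ according to whether $|Z|\ge e-1$ or $|Z|<e-1$. For the large-value piece $Z_1$ it invokes a concavity/Jensen argument on $\ln^2(1+x)$ (concave for $x\ge e-1$) to bound $\E g(Z_1)^2\le g(\sigma)^2$, and for the small-value piece $Z_2$ it carries out a tail-integral computation driven by the fourth-moment bound $\E Z^4\le 3\sigma^4$, finally applying Proposition~\ref{prop:log^2} exactly as you do. So both proofs share the fourth-moment tail estimate and the reduction from $f(\sigma)^2$ to $\sum_i f(a_i)^2$ via Proposition~\ref{prop:log^2}, but the decompositions are orthogonal: the paper thresholds the sum, you threshold the summands.

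Your approach is arguably the cleaner of the two. The small-$a_i$ part is dispatched by the trivial bound $f(t)\le t$ together with the comparability $a_i^2\le 9f(a_i)^2$ from Proposition~\ref{prop:log_ineq}, with no concavity needed; the large-$a_i$ part uses only a dyadic sum against the $t^{-4}$ tail, and the additive constant $C_2$ is absorbed because each large term already contributes at least $(\log 2)^2$. The paper's Jensen step, by contrast, is somewhat delicate because the truncated function $h(x)^2=g(x)^2\mathbf{1}_{\{x\ge e-1\}}$ is not literally concave on $[0,\infty)$ (it jumps at $e-1$), so one has to interpret that step through a concave majorant. Your route sidesteps that issue entirely.
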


\begin{lemma}\label{lem:add_noise}
For arbitrary vectors $y, z \in \R^{n} $ such that
$\norm{f(y)}_2^2 \ge \xi^{-2} \norm{f(z)}_2^2$ for some $\xi\in(0,1)$, it holds that 
$(1-3\xi^{2/3})\norm{f(y)}_2^2 \leq \|f(y+z)\|_2^2 \leq (1+3\xi) \norm{f(y)}_2^2$.
\end{lemma}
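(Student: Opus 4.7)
The plan is to derive both inequalities from a single tool, namely the subadditivity property $\|f(u+v)\|_2 \le \|f(u)\|_2 + \|f(v)\|_2$ provided by Proposition~\ref{lem:log_additive}, applied along two different decompositions of the triangle formed by $y$, $z$, and $y+z$.

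For the upper bound I would apply subadditivity directly to the sum $y+z$. This yields $\|f(y+z)\|_2 \le \|f(y)\|_2 + \|f(z)\|_2 \le (1+\xi)\|f(y)\|_2$ by the hypothesis $\|f(z)\|_2 \le \xi\|f(y)\|_2$, and squaring gives $\|f(y+z)\|_2^2 \le (1+\xi)^2 \|f(y)\|_2^2 \le (1+3\xi)\|f(y)\|_2^2$ since $(1+\xi)^2 = 1+2\xi+\xi^2 \le 1+3\xi$ on $\xi\in(0,1)$.

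For the lower bound I would use the reverse decomposition $y = (y+z) + (-z)$, applying the same subadditivity together with the fact that $f$ is even, so $f(-z_i)=f(z_i)$ entrywise. This gives $\|f(y)\|_2 \le \|f(y+z)\|_2 + \|f(z)\|_2$, which rearranges to $\|f(y+z)\|_2 \ge (1-\xi)\|f(y)\|_2$ (nonnegative since $\xi<1$). Squaring, $\|f(y+z)\|_2^2 \ge (1-\xi)^2 \|f(y)\|_2^2 \ge (1-2\xi)\|f(y)\|_2^2$; because $\xi \le \xi^{2/3}$ on $\xi\in(0,1)$, this is at least $(1-3\xi^{2/3})\|f(y)\|_2^2$, matching the stated conclusion with some slack.

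There is no serious obstacle: once one notices the reverse decomposition $y=(y+z)+(-z)$, both bounds fall out of Proposition~\ref{lem:log_additive} with one elementary algebraic step. The slack in the stated exponent $2/3$ suggests a more refined argument might also be possible, for instance by splitting coordinates into $S = \{i : f(z_i) \le \xi^{2/3} f(y_i)\}$ and $L = \{i : f(z_i) > \xi^{2/3} f(y_i)\}$, using $f(y_i+z_i) \ge (1-\xi^{2/3})f(y_i)$ on $S$ and $\sum_{i\in L} f(y_i)^2 \le \xi^2/\xi^{4/3}\cdot \|f(y)\|_2^2 = \xi^{2/3}\|f(y)\|_2^2$ on $L$, then balancing the two errors at the threshold $\tau = \xi^{2/3}$; this route produces the exponent $2/3$ naturally, but the simple subadditivity route already suffices.
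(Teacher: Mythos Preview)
Your proposal is correct. The upper bound is identical to the paper's argument: expand, apply Cauchy--Schwarz (or equivalently the subadditivity consequence in Proposition~\ref{lem:log_additive}), and bound $(1+\xi)^2\le 1+3\xi$.

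The lower bound, however, is a genuinely different and \emph{simpler} route than the paper's. The paper argues coordinatewise: it partitions indices by the sign of $y_iz_i$ and by whether $|z_i|$ is small, moderate, or large relative to $|y_i|$ (sets $J_1$, $J_2$, and their complement), uses the scalar inequality $\log(1+(1-\zeta)|y_i|)\ge (1-\zeta)\log(1+|y_i|)$ on the ``good'' indices, controls the total $f$-mass on the bad sets via the global hypothesis, and finally optimizes the threshold $\zeta$, which is what produces the exponent $2/3$. Your approach instead applies Proposition~\ref{lem:log_additive} once to the decomposition $y=(y+z)+(-z)$ together with the evenness of $f$, obtaining $\|f(y+z)\|_2\ge (1-\xi)\|f(y)\|_2$ directly. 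Squaring gives $(1-\xi)^2\|f(y)\|_2^2$, which is strictly sharper than the paper's $(1-3\xi^{2/3})\|f(y)\|_2^2$ on $(0,1)$; the coordinate-splitting machinery and the $\zeta$-optimization are unnecessary for the stated lemma. The alternative splitting argument you sketch in your last paragraph is in fact close in spirit to what the paper does, but as you note, it is not needed.
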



\section{Algorithm}
Our algorithm uses two important subroutines: a subsampling data structure called an $H$-\textsf{Sketch}, and a sketch for approximating the inner product of a transformed vector and a raw vector called \textsf{LogSum}. The former is inspired from a subsampling algorithm of \cite{lsw18} and is meant to sample a noisy approximation to a column from a distribution which is close to the desired distribution. In fact, one can show that it is impossible to sample the actual columns in a single pass, hence, we have to resort to noisy approximations  and show they suffice. 
The latter \textsf{LogSum} sketch is the same as in~\cite{lswyy20}, which approximates the inner product $\langle f(x),y\rangle$ for vectors $x,y$.
Executing these sketches in  
parallel is highly non-trivial since the subsampling algorithm of \cite{lsw18} samples columns of $A$ according
to their $\ell_2$ norms, but here we must sample them according to the squares of their $\ell_2$ norms after
applying $f$ to each entry. 

Roughly speaking, the above combination gives us a small set of $\poly(k/\epsilon)$ noisy columns of $f(A)$, sampled approximately from the squared $\ell_2$ norm of each column of $f(A)$, after which we can appeal to squared column-norm sampling results for low rank approximation in \cite{fkv04}, which argue that if you then compute the top-$k$ left singular vectors of $f(A)$, forming the columns of the $n \times k$ matrix $L$, then $LL^\top f(A)$ is a good rank-$k$ approximation to $f(A)$. The final output of the low-rank approximation will be two factors, $L$ and $L^\top f(A)$. The algorithm in~\cite{lswyy20} first computes $L$ by an involved algorithm in three passes, and then computes $L^\top f(A)$ in another pass using \textsc{LogSum} sketches. Our algorithm follows the same outline but we shall demonstrate how to compute $L$ in only one pass, which is our sole focus for low-rank approximation in this paper. Note that our ultimate goal, which we only achieve approximately, is to sample columns of $f(A)$ proportional to their squared $\ell_2$ norms. This is a fundamentally different sampling scheme from that of \cite{lswyy20}, which performs leverage score sampling followed by adaptive sampling, which are not amenable to a one-pass implementation. 

\subsection{$H$-\textsf{Sketch}}\label{sec:h-sketch}

\begin{algorithm}[t]
	\caption{Basic heavy hitter substructure} \label{alg:h-sketch}
	\textbf{Input:} $A \in \mathbb{R}^{n\times n}$, $\nu$, $\phi$ \\
	\textbf{Output:} a data structure $H$
	\begin{algorithmic}[1]
		\State $w\gets O(1/(\phi^2 \nu^3))$
		\State Prepare a pairwise independent hash function $h: [n]\to [w]$
		\State Prepare $4$-wise independent random signs $\{\eps_i\}_{i=1}^n$
		\State Prepare a hash table $H$ with $w$ buckets, where each bucket stores a vector in $\R^n$.
		\For{each $v\in [w]$}
			\State $H_v\gets \sum_{i \in h^{-1}(v)} \eps_i A_{i}$
		\EndFor
		\State \Return $H$
	\end{algorithmic}
\end{algorithm}

\begin{algorithm}[t]
	\caption{Complete heavy hitter structure $\cD$} \label{alg:h-sketch-extended}
	\textbf{Input:} $A \in \mathbb{R}^{n\times n}$, $\nu$, $\phi$, $\delta$ \\
	\textbf{Output:} a data structure $H$
	\begin{algorithmic}[1]
		\State $R\gets O(\log(n/\delta))$
		\For{each $r \in [R]$}
			\State Initialize a basic substructure $H^{(r)}$ (Algorithm~\ref{alg:h-sketch}) with parameters $\nu$ and $\phi$
		\EndFor
		\Statex
		\Function{Query}{$i$}
			\For{each $r\in [R]$}
				\State $v_r\gets H^{(r)}_{h_r(i)}$ \Comment{$h_r$ is the hash function in $H^{(r)}$}
			\EndFor
			\State $r^\ast \gets \text{index $r$ of the median of $\{\norm{f(v_r)}_2\}_{r\in[R]}$}$ \label{alg:h-sketch-extended:median}
			\State \Return $v_{r^\ast}$
		\EndFunction
	\end{algorithmic}
\end{algorithm}

\begin{algorithm}[t]
	\caption{Sampling using $H$-\textsf{Sketch}}\label{alg:h-sketch-sample}
	\begin{algorithmic}[1]
		\Require (i) An estimate $\widehat{M}$ such that $M\leq \widehat{M}\leq KM$; (ii) a complete heavy hitter structure $\cD_0$ of parameters $(O(1), O(\eps^3/(KL^3)), 1/(\hat L+1))$; (iii) $\hat L$ complete heavy hitter structures (see Algorithm~\ref{alg:h-sketch}), denoted by $\cD_1,\dots,\cD_{\hat L}$, where $\cD_j$ $(j\in [\hat L])$ has parameters $(O(1),O(\eps^3/L^3),1/(\hat L+1))$ and is applied to the columns of $A$ downsampled at rate $2^{-j}$; 
		\State $L\gets \log(Kn/\eps)$, $\hat L\gets \log n$
		\State $\zeta\gets $ a random variable uniformly distributed in $[1/2,1]$
		\For{$j=0,\dots,\hat L$}
			\State $\Lambda_j \gets \text{top }\Theta(L^3/\eps^3)\text{ heavy hitters from }\cD_j$
		\EndFor
		\State $j_0\gets \log(4K\eps^{-3}L^3)$
		\State $\zeta\gets$ uniform variable in $[1/2,1]$
		\For{$j = 1,\dots, j_0$}
			\State Let $\lambda_1^{(j)},\dots,\lambda_s^{(j)}$ be the elements in $\Lambda_0$ contained in $[(1+\eps)\zeta\frac{\widehat{M}}{2^j}, (2-\eps)\zeta\frac{\widehat{M}}{2^j}]$
			\State $\wtilde{M}_j \gets \abs{\lambda_1^{(j)}}+\cdots+\abs{\lambda_s^{(j)}}$
		\EndFor
		\For{$j = j_0 + 1,\dots, L$}
			\State Find the largest $\ell$ for which $\Lambda_\ell$ contains $s_j$ elements $\lambda^{(j)}_1,\dots,\lambda^{(j)}_{s_j}$ in $[(1+\eps)\zeta\frac{\widehat{M}}{2^j}, (2-\eps)\zeta\frac{\widehat{M}}{2^j}]$ for $(1-\sqrt{20}\eps)L^2/\eps^2\leq s_j \leq 2(1+\sqrt{20}\eps)L^2/\eps^2$
			\If{such an $\ell$ exists}
				\State $\wtilde{M}_j \gets (\abs{\lambda^{(j)}_1}+\cdots+\abs{\lambda^{(j)}_{s_j}})2^\ell$
				\State $W_j\gets \Lambda_\ell$
			\Else
				\State $\wtilde{M}_j \gets 0$
			\EndIf
		\EndFor
		\State $j^\ast \gets \text{sample from $[L]$ according to pdf } \Pr(j^\ast = j) = \wtilde{M}_j/\sum_j \wtilde{M}_j$
		\State $i^\ast \gets \text{sample from $W_{j^\ast}$ according to pdf } \Pr(i^\ast = i) = \abs{\lambda_i^{(j)}} / \wtilde{M}_j$
		\State $v_{j^\ast,i^\ast}\gets \text{vector returned by } \Call{Query}{i^\ast} \text{ on } \cD_{j^\ast}$
		\State \Return $v_{j^\ast,i^\ast}$
	\end{algorithmic}
\end{algorithm}

We first present a basic heavy hitter structure in Algorithm~\ref{alg:h-sketch}, and a complete heavy hitter structure in Algorithm~\ref{alg:h-sketch-extended} by repeating the basic structure $R$ times. The complete heavy hitter structure supports a query function. Below we analyze the guarantee of this heavy hitter data structure. 

Let $M = \norm{f(A)}_F^2$. We define $I_\eps = \{i\in [n]: \norm{f(A_i)}_2^2 \geq \eps M\}$, the set of the indices of the $\eps$-heavy columns. Let $\alpha$ be a small constant to be determined later.

\begin{lemma}\label{lem:heavy_no_collision}
With probability at least $0.9$, all columns in $I_{\alpha\phi}$ are isolated from each other under $h$.
\end{lemma}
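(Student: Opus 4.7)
The plan is to use a standard second-moment/pairwise-independence argument. First I would bound the size of $I_{\alpha\phi}$: since $M = \sum_{i\in[n]} \norm{f(A_i)}_2^2$ and every column index $i\in I_{\alpha\phi}$ contributes at least $\alpha\phi M$ to this sum, it follows immediately that $|I_{\alpha\phi}| \le 1/(\alpha\phi)$. This is the only structural input; the rest is purely about the hash $h$.

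Next I would count collisions among heavy columns. For every unordered pair $\{i,j\}\subseteq I_{\alpha\phi}$ with $i\ne j$, let $X_{ij}$ be the indicator of the event $h(i)=h(j)$. Pairwise independence of $h: [n]\to [w]$ gives $\E X_{ij} = 1/w$, so
\[
\E\!\left[\sum_{\{i,j\}\subseteq I_{\alpha\phi}} X_{ij}\right] \;\le\; \binom{|I_{\alpha\phi}|}{2}\cdot\frac{1}{w} \;\le\; \frac{1}{2(\alpha\phi)^2}\cdot\frac{1}{w}.
\]
Plugging in $w = \Omega(1/(\phi^2\nu^3))$ from Algorithm~\ref{alg:h-sketch} cancels the $\phi^2$ factors and yields an expected number of collisions of order $\nu^3/\alpha^2$.

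Then I would fix the constants. Choosing the hidden constant in $w$ together with the small constant $\alpha$ so that $\nu^3/\alpha^2 \le 0.1$ (this is where ``$\alpha$ is a small constant to be determined later'' enters the argument, and it is consistent because $\nu = O(1)$ throughout), Markov's inequality gives
\[
\Pr\!\left[\,\exists\, i\ne j\in I_{\alpha\phi}\text{ with } h(i)=h(j)\,\right] \;\le\; \E\!\left[\sum X_{ij}\right] \;\le\; 0.1.
\]
The complementary event is exactly that every pair of distinct columns in $I_{\alpha\phi}$ lands in different buckets, i.e.\ that the heavy columns are mutually isolated, which happens with probability at least $0.9$.

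There is no real obstacle: the lemma is a clean pigeonhole/Markov argument and the parameters in Algorithm~\ref{alg:h-sketch} are precisely chosen so that the collision-bound calculation goes through. The only mild subtlety is making sure $\alpha$ is tied to the constant in $\nu$ and to the constant in the $O(\cdot)$ defining $w$; I would state this dependence explicitly so that later lemmas that impose their own conditions on $\alpha$ can be combined without circularity.
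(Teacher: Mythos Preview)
Your proposal is correct and matches the paper's proof essentially line for line: bound $|I_{\alpha\phi}|\le 1/(\alpha\phi)$, use pairwise independence to get $\Pr[h(i)=h(j)]=1/w$, and apply a union bound/Markov over the $\binom{|I_{\alpha\phi}|}{2}$ pairs. The only cosmetic difference is that the paper phrases the final step as the requirement $w\ge 5/(\alpha^2\phi^2)$ rather than passing through the $\nu^3/\alpha^2$ expression; note also that $\alpha$ is actually fixed later (in Lemma~\ref{lem:HH-guarantee}) to $0.3/C$, so it is the hidden constant in $w$, not $\alpha$, that absorbs the slack here.
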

\begin{proof}
Note that $|I_{\alpha\phi}| \leq 1/(\alpha\phi)$. Thus, there exists a collision with probability at most
\[
 \frac{1}{w}\binom{1/(\alpha\phi)}{2} \le \frac{1}{2w\alpha^2\phi^2}\le 0.1,
\]
provided that $w\geq 1/(0.2\cdot\alpha^2\phi^2) = 5/(\alpha^2 \phi^2)$.
\end{proof}

\begin{lemma}\label{lem:HH_single_repetition}
For each $u\in [n]$, it holds with probability at least $2/3$ that
\[
\Norm{f\left(\sum_{i\not\in (I_{\alpha\phi}\cup\{u\})} \mathbf{1}_{\{h(i) = h(u)\}} \eps_{i} A_i\right)}_2^2 \le 3C\frac{M}{w},
\]
where $C > 0$ is an absolute constant.
\end{lemma}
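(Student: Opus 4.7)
The plan is to show that the expected squared norm is at most $CM/w$ and then apply Markov's inequality with slack factor $3$ to get probability $2/3$. I would not actually need the restriction to light columns for the bound itself; that restriction just reflects the fact that the heavy columns are handled separately via Lemma~\ref{lem:heavy_no_collision}, and summing $\norm{f(A_i)}_2^2$ over all $i\neq u$ already gives $M$.

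Fix $u\in[n]$ and write $X = \sum_{i\in S} \eps_i A_i$, where $S = \{i\not\in I_{\alpha\phi}\cup\{u\}: h(i) = h(u)\}$. Condition first on the hash function $h$, so that $S$ is determined. Then $X$ is a random linear combination of the columns $\{A_i\}_{i\in S}$ with $4$-wise independent Rademacher coefficients $\{\eps_i\}$. Looking at this coordinatewise, for each $j\in[n]$ we have $X_j = \sum_{i\in S} \eps_i A_{j,i}$, which is exactly the setting of Lemma~\ref{lem:light_noise_f}. Applying that lemma coordinatewise and summing over $j$ gives
\[
\E_{\eps}\Norm{f(X)}_2^2 \;=\; \sum_{j=1}^n \E_{\eps} f(X_j)^2 \;\leq\; C \sum_{j=1}^n \sum_{i\in S} f(A_{j,i})^2 \;=\; C\sum_{i\in S}\norm{f(A_i)}_2^2.
\]

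Next I take the expectation over the pairwise-independent hash function $h$. For any fixed $i\neq u$, pairwise independence yields $\Pr_h[h(i)=h(u)] = 1/w$, so
\[
\E_h \sum_{i\in S} \norm{f(A_i)}_2^2 \;\leq\; \E_h \sum_{i\neq u} \mathbf{1}_{\{h(i)=h(u)\}}\norm{f(A_i)}_2^2 \;=\; \frac{1}{w}\sum_{i\neq u}\norm{f(A_i)}_2^2 \;\leq\; \frac{M}{w}.
\]
Combining the two bounds gives $\E_{h,\eps}\norm{f(X)}_2^2 \leq CM/w$, and Markov's inequality yields $\Pr[\norm{f(X)}_2^2 > 3CM/w] \leq 1/3$, which is what we wanted.

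There is no real obstacle here; the only point worth double-checking is that $4$-wise independence of the $\eps_i$ (which is what Algorithm~\ref{alg:h-sketch} provides) is precisely the hypothesis of Lemma~\ref{lem:light_noise_f}, and that the coordinatewise application is valid because linearity of expectation requires no independence across coordinates $j$. The exclusion of $I_{\alpha\phi}$ plays no role in this particular inequality; it is used separately (via Lemma~\ref{lem:heavy_no_collision}) to ensure that the heavy columns are isolated in their own buckets, so the residual noise in $u$'s bucket is bounded by this lemma.
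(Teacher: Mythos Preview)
Your proof is correct and follows essentially the same route as the paper: apply Lemma~\ref{lem:light_noise_f} coordinatewise (conditioned on $h$) to get $\E_\eps \norm{f(X)}_2^2 \le C\sum_{i\in S}\norm{f(A_i)}_2^2$, take the expectation over the pairwise-independent hash to bound this by $CM/w$, and finish with Markov's inequality. Your remark that the exclusion of $I_{\alpha\phi}$ is not actually used in this particular bound is also accurate.
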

\begin{proof}
Let $v=h(u)$. Since $h$ is pairwise independent, $\Pr\{h(i)=v\}=1/w$ for all $i\neq w$. Let
\[
Z_v = \sum_{i\not\in (I_{\alpha\phi}\cup\{u\})}\mathbf{1}_{\{h(i) = v\}} \|f(A_i)\|_2^2.
\]
then
\[
\E Z_v \leq \sum_{i\not\in I_{\alpha\phi}} \E \mathbf{1}_{\{h(i) = v\}} \Norm{f(A_i)}_{2}^2 \leq \frac{M}{w}.
\]
It follows from Lemma~\ref{lem:light_noise_f} that
\begin{align*}
 \E_{\{\eps_i\}, h} \Norm{f\left(\sum_{i\not\in I_{\alpha\phi}} \mathbf{1}_{\{h(i) = v\}} \eps_i A_i \right)}_2^2  
&\leq \E_h C\sum_{i\not\in I_{\alpha\phi}} \Norm{f\left(\mathbf{1}_{\{h(i) = v\}} A_i\right)}_2^2 \\
&= C \E_h Z_v \\
&\leq C\frac{M}{w},
\end{align*}
where we used the fact that $f(0) = 0$ and $\mathbf{1}_{\{h(i) = v\}}\in\{0,1\}$ in the second step (the equality). The result follows from  Markov's inequality.
\end{proof}

\begin{lemma}\label{lem:HH-guarantee}
Suppose that $\nu\in (0, 0.05]$ and $\alpha = 0.3/C > \beta$, where $C$ is the absolute constant in Lemma~\ref{lem:HH_single_repetition}. With probability at least $1-\delta$, for all $i\in [n]$, the output $v_{r^\ast}$ of Algorithm~\ref{alg:h-sketch-extended} satisfies that
\begin{enumerate}[label=(\alph*)]
	\item $(1-\nu)\norm{f(A_i)}_2^2 \leq \norm{f(v_{r^\ast})}_2^2 \leq (1+\nu)\norm{f(A_i)}_2^2$ for all $i\in I_\phi$;
	\item $\norm{f(v_{r^\ast})}_2^2 \leq 0.92\phi M$ for all $i\not\in I_{\alpha\phi}$;
	\item $\norm{f(v_{r^\ast})}_2^2 \leq (1+\nu^{3/2})^2\phi M$.
\end{enumerate}
\end{lemma}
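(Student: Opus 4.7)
The plan is to first establish each bound with constant probability for a single basic sketch, then amplify to probability $1-\delta/n$ via the median over the $R=O(\log(n/\delta))$ independent copies in Algorithm~\ref{alg:h-sketch-extended}, and finish with a union bound over $i$. For a fixed column $i$ and repetition $r$ with hash $h:=h_r$ and signs $\{\eps_j\}:=\{\eps_j^{(r)}\}$, I would decompose the bucket content as
$v_r = \eps_i A_i + \eta_r + z_r$, where $\eta_r = \sum_{j\in I_{\alpha\phi}\setminus\{i\},\,h(j)=h(i)} \eps_j A_j$ collects the contributions of heavy columns colliding with $i$ and $z_r = \sum_{j\notin I_{\alpha\phi}\cup\{i\},\,h(j)=h(i)} \eps_j A_j$ collects the light contributions. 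I will work under the good event $G_{r,i} = \{\eta_r=0\}\cap\{\|f(z_r)\|_2^2\leq 3CM/w\}$. For $i\in I_{\alpha\phi}$, the first clause is implied by the isolation event of Lemma~\ref{lem:heavy_no_collision} (probability $\geq 0.9$); for $i\notin I_{\alpha\phi}$, it follows from $|I_{\alpha\phi}|\leq 1/(\alpha\phi)$, pairwise independence of $h$, and $w\geq 5/(\alpha^2\phi^2)$, which together bound the collision probability by $\alpha\phi/5$. The second clause is exactly Lemma~\ref{lem:HH_single_repetition}. Combining these I obtain $\Pr[G_{r,i}]\geq 1/2+c_0$ for some absolute constant $c_0>0$.

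Conditioned on $G_{r,i}$, I have $v_r = \eps_i A_i + z_r$ with $\|f(\eps_i A_i)\|_2 = \|f(A_i)\|_2$ (since $f(x)=\log(1+|x|)$ is even and $|\eps_i|=1$). I then handle the three claims as follows. For (a), with $i\in I_\phi$ the noise-to-signal ratio obeys $\|f(z_r)\|_2^2/\|f(A_i)\|_2^2 \leq 3C/(w\phi)$; choosing the hidden constant in $w=O(1/(\phi^2\nu^3))$ large enough (specifically $w\geq 81C/(\phi\nu^3)$) makes this ratio at most $(\nu/3)^3$, so applying Lemma~\ref{lem:add_noise} with $\xi = (\nu/3)^{3/2}$ delivers $(1\pm\nu)\|f(A_i)\|_2^2$. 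For (b), with $i\notin I_{\alpha\phi}$ I use Proposition~\ref{lem:log_additive} to write $\|f(v_r)\|_2 \leq \|f(A_i)\|_2 + \|f(z_r)\|_2 \leq \sqrt{\alpha\phi M}+\sqrt{3CM/w}$, and verify by direct algebra that the choice $\alpha=0.3/C$ with $w\geq 5/(\alpha^2\phi^2)$ forces the squared sum below $0.92\phi M$. For (c), which is only nontrivial in the intermediate regime $i\in I_{\alpha\phi}\setminus I_\phi$ (as it is subsumed by (a) for $i\in I_\phi$ via a trivial upper bound and by (b) for $i\notin I_{\alpha\phi}$), the fact that $\|f(A_i)\|_2^2 < \phi M$ combined with Lemma~\ref{lem:add_noise} at $\xi^2\leq 3C/(\alpha\phi w)$ gives $\|f(v_r)\|_2^2 \leq (1+3\xi)\phi M \leq (1+\nu^{3/2})^2\phi M$ once the absolute constant in $w$ is taken large enough.

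To finish, I use that the $R$ repetitions are independent and each $G_{r,i}$ occurs with probability $\geq 1/2+c_0$; a Chernoff bound then implies that more than $R/2$ repetitions satisfy $G_{r,i}$ with probability $\geq 1-\delta/n$ for $R=\Theta(\log(n/\delta))$. In that event, $\|f(v_r)\|_2$ lies in the desired range for a majority of $r$'s, so the median-of-norms rule on line~\ref{alg:h-sketch-extended:median} selects an index $r^\ast$ whose $v_{r^\ast}$ meets the claimed bound; a union bound over $i\in[n]$ yields overall failure probability $\leq\delta$. The main obstacle I expect is not any individual probabilistic step but the bookkeeping of constants: I need one fixed choice of $\alpha$, of the implicit constant in $w=O(1/(\phi^2\nu^3))$, and of $\nu\leq 0.05$ that simultaneously (i) keeps $\Pr[G_{r,i}]$ strictly above $1/2$ so median amplification applies, (ii) matches the two-sided $(1\pm\nu)$ factor in (a) via Lemma~\ref{lem:add_noise}, (iii) produces the concrete $0.92\phi M$ in (b) via Proposition~\ref{lem:log_additive}, and (iv) yields the tighter $(1+\nu^{3/2})^2$ factor in (c). Reconciling (ii) and (iv) is the tightest point, since both come from Lemma~\ref{lem:add_noise} but at different signal scales $\phi M$ versus $\alpha\phi M$.
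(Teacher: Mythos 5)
The proposal follows the same basic strategy as the paper's proof: condition on the isolation/light-noise events from Lemmata~\ref{lem:heavy_no_collision} and~\ref{lem:HH_single_repetition}, establish each bound per single repetition with constant probability strictly above $1/2$, amplify via the median over $R=O(\log(n/\delta))$ repetitions, and finish with a union bound over $i$. Case (a) is handled identically via Lemma~\ref{lem:add_noise}. The only substantive differences are the deterministic inequalities used for (b) and (c): for (b) you apply Proposition~\ref{lem:log_additive} to split off $\|f(A_i)\|_2$ from the light tail, whereas the paper folds $A_i$ into the Lemma~\ref{lem:light_noise_f} expectation bound and applies Markov to the whole bucket, yielding $3C(\alpha\phi+1/w)M$; for (c) you invoke Lemma~\ref{lem:add_noise} with $\xi^2\leq 3C/(\alpha\phi w)$, whereas the paper uses Proposition~\ref{lem:log_additive} to get $\bigl(\sqrt{\phi M}+\sqrt{\nu^3\phi M}\bigr)^2$ directly, which matches the target form $(1+\nu^{3/2})^2\phi M$ more immediately. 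Both routes are correct and give the stated constants; your route requires a slightly larger implicit constant in $w$ for (c) but still fits within $w=O(1/(\phi^2\nu^3))$. You are in fact more explicit than the paper in conditioning on ``no heavy column collides with $i$'' for $i\notin I_{\alpha\phi}$, which the paper leaves implicit.

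One flaw worth flagging: you assert that (c) is ``subsumed by (a) for $i\in I_\phi$ via a trivial upper bound.'' This is false — for $i\in I_\phi$ with $\|f(A_i)\|_2^2 \gg \phi M$, part (a) gives $\|f(v_{r^\ast})\|_2^2 \approx \|f(A_i)\|_2^2$, which can far exceed $(1+\nu^{3/2})^2\phi M$. The paper's own proof only establishes (c) for $i\in I_{\alpha\phi}\setminus I_\phi$, and, combined with (b), for all $i\notin I_\phi$; the lemma's statement of (c) is implicitly restricted to that regime. You should restrict (c) to $i\notin I_\phi$ rather than claim coverage for heavy $i$. This does not affect the substance of your argument for the cases that matter.
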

\begin{proof}
Fix $u\in [n]$. With probability at least $0.9-1/3 > 0.5$, the events in Lemmata~\ref{lem:heavy_no_collision} and~\ref{lem:HH_single_repetition} happen. Condition on those events.

From the proof of Lemma~\ref{lem:HH_single_repetition}, we know that $i\in I_\phi$ do not collide with other elements in $I_{\alpha\phi}$. Hence, it follows from Lemma~\ref{lem:add_noise} (where $\xi^2 \leq 3C/(\phi w)\leq (\nu/3)^{3}$) that
\[
(1-\nu)\Norm{f(A_i)}_2^2 \leq \Norm{f(H_{h(i)})}_2^2 \leq (1+\nu)\Norm{f(A_i)}_2^2,
\]
provided that $w\geq 3^4 C/(\phi \nu^{3})$.

When $i\not\in I_{\alpha\phi}$, we have 
\[
    \Norm{f(H_{h(i)})}_2^2 \leq 3C\left(\alpha\phi + \frac{1}{w}\right) M \leq (0.9 + \nu^{3/2})\phi M \leq 0.92\phi M.
\]

When $i\in I_{\alpha\phi}\setminus I_\phi$, we have that $H_i$ contains only $i$ and columns from $[n]\setminus I_{\alpha\phi}$. Hence by Proposition~\ref{lem:log_additive},
\[
    \Norm{f(H_{h(i)})}_2^2 \leq \left(\Norm{f(A_i)}_2 + \sqrt{\frac{3C}{w} M} \right)^2 
\leq \left(\sqrt{\phi M} + \sqrt{\nu^3\phi M} \right)^2 \\
\leq (1+\nu^{3/2})^2\phi M,
\]
provided that $w\geq 3C/(\phi \nu^3)$.

Finally, repeating $O(\log(n/\delta))$ times and taking the median and a union bound over all $n$ columns gives the claimed result.
\end{proof}

Next we analyze the sampling algorithm, presented in Algorithm~\ref{alg:h-sketch-sample}, which simulates sampling a column from $A$ according to the column norms. The following theorem is our guarantee.

\begin{theorem}\label{thm:main-sampling}
Let $\eps>0$ be a constant small enough. Algorithm~\ref{alg:h-sketch-sample} outputs $v_{j^\ast,i^\ast}$ which satisfies that, with probability at least $0.9$, there exists $u\in [n]$ such that 
\[
(1-O(\eps))\Norm{f(A_u)}_2^2 \leq \Norm{f(v_{j^\ast,i^\ast})}_2^2 
\leq (1+O(\eps))\Norm{f(A_u)}_2^2.
\]
Furthermore, there exists an absolute constant $c \in (0,1/2]$ such that
\[
\Pr\{u = i\} \geq c\frac{\Norm{f(A_i)}_2^2}{\Norm{f(A)}_F^2}
\]
for all $i$ belonging to some set $I\subseteq [n]$ such that $\sum_{i\in I} \norm{f(A_i)}_2^2\geq (1- 6\eps)M$, provided that $\eps$ further satisfies that $\eps \leq c/C$ for some absolute constant $C > 0$.
\end{theorem}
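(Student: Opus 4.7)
The plan is to analyze Algorithm~\ref{alg:h-sketch-sample} in three stages: (i) reducing to the event where every heavy-hitter structure $\cD_j$ satisfies Lemma~\ref{lem:HH-guarantee}, (ii) analyzing the per-level mass estimate $\wtilde M_j$, and (iii) combining these into the final sampling probability and the norm-approximation bound. First, I would union-bound Lemma~\ref{lem:HH-guarantee} over all queries made to $\cD_0,\dots,\cD_{\hat L}$ (each is initialized with failure parameter $1/(\hat L+1)$), so with constant probability every returned $v$ in the algorithm either satisfies $\norm{f(v)}_2^2 = (1\pm O(\eps))\norm{f(A_i)}_2^2$ when $i$ is heavy in the relevant (possibly downsampled) instance, or has norm below the appropriate ``light'' threshold. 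Throughout, the reported heavy-hitter norms $|\lambda_i^{(j)}|$ are thus $(1\pm O(\eps))$-accurate proxies for the true $\norm{f(A_{u_i})}_2^2$ of the underlying column $u_i$.

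Next, I would assign each column $i$ a natural level $j(i)$ defined by $\zeta\widehat M/2^{j(i)} \leq \norm{f(A_i)}_2^2 < 2\zeta\widehat M/2^{j(i)}$ and argue that the uniform shift $\zeta\in[1/2,1]$ isolates most of the mass from the band boundaries. Specifically, only an $O(\eps)$-fraction of columns (weighted by $\norm{f(A_i)}_2^2$) can have their reported norm land outside the band $[(1+\eps)\zeta\widehat M/2^j,(2-\eps)\zeta\widehat M/2^j]$ for $j=j(i)$; letting $I$ be the set of columns whose $(1\pm O(\eps))$-noisy norms fall cleanly inside the band $j(i)$, an averaging argument over $\zeta$ yields $\sum_{i\in I}\norm{f(A_i)}_2^2 \geq (1-6\eps)M$. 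This is where the $6\eps$ slack in the statement comes from.

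Now I would verify $\wtilde M_j \approx \sum_{i:j(i)=j}\norm{f(A_i)}_2^2$ level by level. For $j\le j_0$ the band threshold $\zeta\widehat M/2^j \ge \Omega(\eps^3 M/L^3)$, so every column at level $j$ is heavy enough to appear in $\Lambda_0$; the sum $\wtilde M_j = \sum_t |\lambda_t^{(j)}|$ is thus a direct $(1\pm O(\eps))$-accurate estimate of the level mass. For $j>j_0$ I would pass to the subsampled structures: the number of level-$j$ columns surviving downsampling at rate $2^{-\ell}$ is Binomial$(c_j,2^{-\ell})$ for $c_j$ columns at level $j$, and the largest $\ell$ for which $s_j\in [(1\pm\sqrt{20}\eps)L^2/\eps^2]$ columns survive is, by a Chernoff/$\ell$-union bound, the (essentially unique) level where $c_j 2^{-\ell}\approx L^2/\eps^2$. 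Hence $\wtilde M_j = (\sum_t|\lambda_t^{(j)}|)\cdot 2^\ell \approx s_j\cdot(\widehat M/2^j)\cdot 2^\ell \approx c_j\cdot(\widehat M/2^j)$, which matches the true level mass up to $(1\pm O(\eps))$. Using $\widehat M\le KM$ and $j_0 = \log(4K\eps^{-3}L^3)$ one then checks that truncating $j$ at $L = \log(Kn/\eps)$ drops only $O(\eps)$ fraction of mass.

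Finally, for any $u\in I$, chaining the two sampling steps gives
\[
\Pr\{u^\ast = u\} \;=\; \frac{\wtilde M_{j(u)}}{\sum_{j'}\wtilde M_{j'}}\cdot\frac{|\lambda_u^{(j(u))}|}{\wtilde M_{j(u)}} \;=\;\frac{|\lambda_u^{(j(u))}|}{\sum_{j'}\wtilde M_{j'}} \;\geq\; c\,\frac{\norm{f(A_u)}_2^2}{\norm{f(A)}_F^2},
\]
using $|\lambda_u^{(j(u))}|=(1\pm O(\eps))\norm{f(A_u)}_2^2$ and $\sum_{j'}\wtilde M_{j'}\le (1+O(\eps))M$. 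The norm guarantee $\norm{f(v_{j^\ast,i^\ast})}_2^2 = (1\pm O(\eps))\norm{f(A_u)}_2^2$ is immediate from Lemma~\ref{lem:HH-guarantee}(a) since $u$ is by construction heavy in the (possibly downsampled) instance $\cD_{j^\ast}$ used to answer the final \textsc{Query}. The main technical obstacle I anticipate is the concentration step for $j>j_0$: we must show that for the correct $\ell$ the Binomial count lies in the prescribed window while simultaneously ruling out any larger $\ell$ giving a spurious window-count, and the hashing noise in $\cD_j$ can push a borderline column into or out of the band. Controlling both sources of error simultaneously is what pins down the $(1\pm\sqrt{20}\eps)$ slack in the algorithm.
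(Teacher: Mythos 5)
Your proposal follows essentially the same route as the paper's proof: condition on the heavy-hitter guarantees of Lemma~\ref{lem:HH-guarantee} across all $\cD_0,\dots,\cD_{\hat L}$, decompose the columns into level sets shifted by the random $\zeta$, estimate each level's mass $\wtilde{M}_j$ (directly from $\cD_0$ for $j\le j_0$, via subsampling and binomial concentration for $j>j_0$, which the paper outsources to the cited arguments of Levin et al.\ and Andoni et al.), and chain the two sampling steps. The one place where your accounting is loose is the final probability chain: you write $\Pr\{u^\ast=u\}$ as if column $u$ is deterministically present in $W_{j(u)}$, which would give $c=1-O(\eps)$. In fact, whether $u$ lands cleanly inside its band depends on $\zeta$ (and, for $j>j_0$, on surviving the downsampling at the selected rate), so the chain must be conditioned on an event $\cF_u$ that holds only with some constant probability $\ge 0.9\beta$; this factor is exactly where the theorem's constant $c\in(0,1/2]$ comes from. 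Relatedly, the set $I$ in the statement must be deterministic, whereas your $I$ (columns whose noisy norms land cleanly in their band) is a random set depending on $\zeta$ and the hashing; the paper fixes this by defining $I=\{i:\Pr\{\cE_i\}\ge\beta\}$ and using a Markov-type averaging to bound the mass outside $I$ by $2\eps/(1-\beta)\cdot M$. Both points are repairable within your framework and do not change the architecture of the argument.
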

\begin{proof}
The analysis of the algorithm is largely classical, for which we define the following notions:
\begin{enumerate}[label=(\arabic*)]
	\item $T_{j} = \zeta M/2^j$;
	\item $S_{j} = \left \{i \in [n] : \|f(A_{i})\|_2^2 \in \left ( T_{j},2T_{j} \right ] \right \} $ is the $j$-th {\it level set} of $A$;
	\item a level $j\in [L]$ is \emph{important} if $|S_j| \geq \eps 2^j/L$;
	\item $\cJ\subseteq [L]$ is the set of all important levels.
\end{enumerate}

It follows from the argument in~\cite{LWY21}, or an argument similar to~\cite{abiw09} that the columns we miss contribute to only an $O(\eps)$-fraction of the norm, and for each level $j \in \{1,\dots,j_0\}\cup \cJ$, each of the recovered columns $\lambda_i$ ($i\in [s_j]$) corresponds to some $u = u(i)\in S_j$ and satisfies that $(1-O(\eps))\norm{f(A_{u_i})}_2^2 \leq \lambda_i \leq (1+O(\eps))\norm{f(A_{u_i})}_2^2$. 

Next we prove the second part. For a fixed $i\in [n]$, define events 
\[
\cE_i = \{\text{$i$ falls in a level $j\in [j_0]\cup \cJ$}\}
\]
and a set of ``good'' columns
\[
I = \{i: \Pr\{\cE_i\} \geq \beta\}
\]
for some constant $\beta\leq 1/2$. Since all non-important levels always contribute to at most a $2\eps$-fraction of $M$, it follows that the bad columns contribute to at most a $2\eps/(1-\beta)$-fraction of $M$, that is,
\[
\sum_{i\not\in I} \Norm{f(A_i)}_2^2 \leq \frac{2\eps}{1-\beta}\cdot M.
\]

Next we define the event that
\[
\cF_i = \{\cE_i \text{ and }\norm{f(A_i)}_2^2 \in [(1+\eps)T_j,2(1-\eps)T_j]\},
\]
then it holds for all $i\in I$ that $\Pr\{\cF_i\}\geq \Pr\{\cE_i\} - O(\eps) \geq 0.9\beta$ for $\eps$ sufficiently small.

Let $\cG_j$ denote the event that the magnitude level $j$ is chosen, and $j(i)$ is the index of the magnitude level containing column $i$. Then for those $i$'s with $j = j(i)\in [j_0]\cup \cJ$,
\[
\Pr\{\cG_{j}|\cF_i\} = \frac{(1\pm O(\eps))\wtilde{M}_{j}}{(1\pm O(\eps))\sum_j \wtilde{M}_j} = \frac{(1\pm O(\eps))M_{j}}{(1\pm O(\eps))M} 
= (1\pm O(\eps))\frac{M_{j}}{M}
\]
and
\[
\Pr\{u = i \mid \cG_{j}\cap \cF_i\} = \frac{\lambda^{(j)}_t}{\wtilde{M}_j} = \frac{1\pm O(\eps)\norm{f(A_i)}_2^2}{(1\pm O(\eps))M_{j}} = (1\pm O(\eps))\frac{\norm{f(A_i)}_2^2}{M_j}
\]
Hence
\[
\Pr\{u = i\} = \Pr\{u = i \mid \cG_{j} \cap \cF_i\}\Pr\{\cG_{j}|\cF_i\}\Pr\{\cF_i\} 
\geq 0.9\beta (1 - O(\eps)) \frac{\norm{f(A_i)}_2^2}{M}\geq 0.8\beta\frac{\norm{f(A_i)}_2^2}{M},
\]
provided that $\eps$ is sufficiently small.
\end{proof}

Now we show how to obtain an overestimate $\widehat{M}$ for $M$. We assume that all entries of $A$ are integer multiples of $\eta = 1/\poly(n)$ and are bounded by $\poly(n)$, which is a common and necessary assumption for streaming algorithms, otherwise storing a single number would take too much space. Let $\tilde{f}(x) = \log^2(1+|\eta x|)$, then $\norm{f(A)}_F^2 = \sum_{i,j} \tilde{f}(\eta^{-1}A)$, where $\eta^{-1}A$ has integer entries. Hence, we can run the algorithm implied by Theorem 2 of~\cite{BCWY16} on $\eta^{-1}A$ in parallel in order to obtain a constant-factor estimate to $\norm{f(A)}_F^2$. To justify this application of the theorem, we verify in Appendix~\ref{sec:1-pass-tractable} that the function $\tilde{f}(|x|)$ is slow-jumping, slow-dropping and predictable on nonnegative integers as defined by~\cite{BCWY16}.

Finally, we calculate the sketch length. The overall sketch length is dominated by that of Algorithm~\ref{alg:h-sketch-sample}. In Algorithm~\ref{alg:h-sketch-sample}, there are $\hat L = O(\eps^{-1}\log n)$ heavy hitter structures $\cD_1,\dots,\cD_{\hat L}$, each of which has a sketch length of $O(1/(\phi^2\nu^3) \log(nL)) = \poly(L,1/\eps,\log n) = \poly(\log n,1/\eps)$. There is an additional heavy hitter structure $\cD_0$ of sketch length $O(\poly(K, L, 1/\eps,\log n)) = \poly(\log n,1/\eps)$. Hence the overall sketch length is $\poly(\log n,1/\eps)$. Each cell of the sketch stores an $n$-dimensional vector. We summarize this in the following theorem.

\begin{theorem}\label{thm:h sketch}
Suppose that $A\in (\eta\Z)^{n\times n} $ with $|A_{ij}|\leq \poly(n)$ is given in a turnstile stream, where $\eta = 1/\poly(n)$. There exists a randomized sketching algorithm which maintains a sketch of $n\poly(\eps^{-1}\log n)$ space and outputs a vector $v_{j^\ast,i^\ast}\in \R^n$ which satisfies the same guarantee as given in Theorem~\ref{thm:main-sampling}.
\end{theorem}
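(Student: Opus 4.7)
The plan is to show that every component invoked by Algorithm~\ref{alg:h-sketch-sample} is linear in the matrix entries (or, in one case, reducible to a known streaming primitive), so all the sketches can be maintained in a single turnstile pass, and then to sum up the space.

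First I would check that the basic $H$-sketch of Algorithm~\ref{alg:h-sketch} is a linear sketch of $A$. The hash function $h$ and the signs $\eps_i$ can be drawn from $O(\log n)$-wise independent families and stored in $\poly(\log n)$ bits via Nisan's generator or explicit constructions. The bucket content $H_v = \sum_{i\in h^{-1}(v)} \eps_i A_i$ is a fixed linear combination of columns of $A$, so a turnstile update $A_{i,j}\leftarrow A_{i,j}+\delta$ is implemented by $(H_{h(i)})_j\leftarrow (H_{h(i)})_j + \eps_i\delta$. The complete structure $\cD$ of Algorithm~\ref{alg:h-sketch-extended} is simply $R=O(\log(n/\delta))$ independent basic structures, and its query function only reads from the stored buckets, so it is turnstile-maintainable at no extra cost.

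Next I would handle the downsampled structures $\cD_1,\dots,\cD_{\hat L}$. Downsampling at rate $2^{-j}$ is realized by drawing a further pairwise-independent hash $g_j:[n]\to[2^j]$ and only including a column $i$ in the sketch when $g_j(i)=1$; equivalently, in the update rule, the bucket update is gated by the indicator $\mathbf 1_{\{g_j(j_{\text{col}})=1\}}$. This is still linear, so each $\cD_j$ is a valid turnstile sketch. The heavy-hitter structure $\cD_0$ is just one more instantiation with parameters $(O(1),O(\eps^3/(KL^3)),1/(\hat L+1))$.

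For the overestimate $\widehat M$ I would run, in parallel, the $F_0$-style algorithm of Braverman--Chestnut--Woodruff--Yang on $\eta^{-1}A$ with the entrywise statistic $\tilde f(x)=\log^2(1+|\eta x|)$; since $\tilde f$ is slow-jumping, slow-dropping and predictable on nonnegative integers (verified in Appendix~\ref{sec:1-pass-tractable}), the cited theorem yields a constant-factor approximation to $\sum_{i,j}\tilde f(\eta^{-1}A_{i,j})=\|f(A)\|_F^2$ in $\poly(\eps^{-1}\log n)$ bits. Setting $\widehat M$ to, say, twice the returned value gives $M\le \widehat M\le KM$ for $K=O(1)$ with high probability. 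Because it is linear in the implicit frequency vector, it too is turnstile.

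Finally I would total up the space. With $\phi=O(\eps^3/L^3)$, $\nu=O(1)$, $\delta=1/(\hat L+1)$, $L=O(\log(n/\eps))$, and $\hat L=O(\log n)$, each basic $H$-sketch has $w=O(1/(\phi^2\nu^3))=\poly(\eps^{-1}\log n)$ buckets, each holding an $n$-vector; multiplying by $R=O(\log(n/\delta))$ repetitions, by $\hat L+1$ structures, and adding $\cD_0$ plus the BCWY sketch gives the claimed $n\cdot\poly(\eps^{-1}\log n)$ bound. The correctness guarantee then inherits directly from Theorem~\ref{thm:main-sampling}, since the sketch outputs are identical (in distribution) to those used by that theorem.

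The only genuinely non-routine step, and hence the main obstacle, is justifying the BCWY invocation on $\tilde f$: one must check that the piecewise-logarithmic function really satisfies the slow-jumping/slow-dropping/predictability conditions, including near the origin where $\log^2(1+|\eta x|)$ is nearly flat. The rest is a careful accounting exercise: confirming that every random object (hash functions, signs, downsampling selectors) has a $\poly(\log n)$-bit seed so that the per-cell overhead is dominated by the $n$-dimensional stored vector, and that the number of sketches is $O(\log n)\cdot\poly(\eps^{-1}\log n)$ as required.
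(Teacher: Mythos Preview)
Your proposal is correct and follows essentially the same approach as the paper: invoke the BCWY algorithm on $\eta^{-1}A$ with $\tilde f(x)=\log^2(1+|\eta x|)$ to obtain $\widehat M$ (with the slow-jumping/slow-dropping/predictable verification deferred to Appendix~\ref{sec:1-pass-tractable}), and then sum up the space over the $\hat L+1$ heavy-hitter structures, each storing $\poly(\eps^{-1}\log n)$ buckets of $n$-dimensional vectors. You are somewhat more explicit than the paper about the linearity of each sketch and the implementation of downsampling via hashing, but the paper treats these as routine and the substance is identical.
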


\subsection{Low-Rank Approximation}\label{sec:low-rank}

Suppose that we have an approximate sampling of the rows of $f(A)$ so that we obtain a sample $f(A_i)+E_i$ with probability $p_i$ satisfying
\begin{equation}\label{eqn:sampling_distribution}
p_i \geq c\frac{\Norm{f(A_i)}_2^2}{\Norm{f(A)}^2_F}
\end{equation}
for some absolute constant $c\leq 1$. The $p_i$'s are known to us (if $c = 1$, then we do not need to know the $p_i$).

The following is our main theorem in this section, which is analogous to Theorem 2 of \cite{fkv04}. The proof is postponed to Appendix~\ref{sec:proof of low rank raw}.
\begin{theorem}\label{thm:low-rank-raw}
Let $V$ denote the subspace spanned by $s$ samples drawn independently according to the distribution \eqref{eqn:sampling_distribution}, where each sample has the form $f(A_i)+E_i$ for some $i\in [n]$. Suppose that $\norm{E_i}_2 \leq \gamma\norm{f(A_i)}_2$ for some $\gamma > 0$.  Then with probability at least $9/10$, there exists an orthonormal set of vectors $y_1,y_2,\dots,y_k$ in $V$ such that
\[
\Norm{f(A) - f(A)\sum_{j=1}^k y_j y_j^\top}_F^2 
\leq \min_{D: \rk(D) \leq k}\Norm{f(A) - D}_F^2 + \frac{10k}{sc}(1+\gamma)^2 \Norm{f(A)}_F^2.
\]
\end{theorem}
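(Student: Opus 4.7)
The plan is to follow the classical Frieze--Kannan--Vempala (FKV) low-rank approximation analysis, adapted to handle the entrywise perturbations $E_i$ in the samples.

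\textbf{Setup.} Let $i_1,\dots,i_s$ be the iid sample indices and let $S \in \R^{s\times n}$ be the rescaled sample matrix whose $t$-th row is $(f(A_{i_t})+E_{i_t})^\top / \sqrt{sp_{i_t}}$. Let $B \in \R^{n\times n}$ denote the ``noise-corrupted'' matrix whose $i$-th row is $f(A_i)+E_i$; a direct computation, using that the $1/(sp_{i_t})$ scaling cancels the sampling probability, yields $\E[S^\top S] = B^\top B$. Let $W \subseteq V$ be the subspace spanned by the top-$k$ right singular vectors $y_1,\dots,y_k$ of $S$; these are orthonormal and lie in the row span of $S$, hence in $V$, so they are the natural output.

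\textbf{Variance and perturbation lemma.} Writing $S^\top S = \sum_t Y_t$ as a sum of independent rank-one contributions, a standard second-moment calculation using $\norm{f(A_i)+E_i}_2 \le (1+\gamma)\norm{f(A_i)}_2$ together with the sampling lower bound $p_i \ge c\norm{f(A_i)}_2^2/\norm{f(A)}_F^2$ gives
\[
\E\Norm{S^\top S - B^\top B}_F^2 \;\le\; \sum_t \E\Norm{Y_t}_F^2 \;\le\; \frac{(1+\gamma)^4}{sc}\,\Norm{f(A)}_F^4.
\]
Next, the classical FKV key lemma states that for any matrix $M$ and the top-$k$ right singular subspace $W$ of $S$, Mirsky's singular-value perturbation inequality combined with Cauchy--Schwarz (using $\norm{P_W}_F=\sqrt{k}$) yields
\[
\Norm{M - M P_W}_F^2 \;\le\; \Norm{M - M_k}_F^2 + 2\sqrt{k}\,\Norm{M^\top M - S^\top S}_F.
\]

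\textbf{Transferring back to $f(A)$.} I would apply the key lemma with $M=B$, so that the right-hand side involves only the zero-mean deviation $\norm{S^\top S - B^\top B}_F$ controlled by the variance computation, obtaining $\norm{B-BP_W}_F^2 \le \norm{B-B_k}_F^2 + O(\sqrt{k})\norm{S^\top S - B^\top B}_F$. Then I would transfer the bound from $B$ to $f(A)$ via $f(A)(I-P_W) = B(I-P_W) - E(I-P_W)$, using $\norm{E}_F^2 \le \gamma^2\norm{f(A)}_F^2$ and the contractive property of $I-P_W$, together with the optimality inequality $\norm{B-B_k}_F^2 \le \norm{B-[f(A)]_k}_F^2 \le \norm{f(A)-[f(A)]_k}_F^2 + O(\gamma)\norm{f(A)}_F^2$. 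Applying Markov's inequality to the variance bound promotes the expectation into a high-probability statement, and collecting the terms produces the additive error $\frac{10k}{sc}(1+\gamma)^2\norm{f(A)}_F^2$.

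\textbf{Main obstacle.} The delicate point is transferring the FKV conclusion from $B$ to $f(A)$ without inflating the leading constant in front of $\Norm{f(A)-[f(A)]_k}_F^2$. A naive triangle inequality $\norm{f(A)(I-P_W)}_F \le \norm{B(I-P_W)}_F + \norm{E}_F$ squares into a $(1+\eta)$ factor on the optimal-error term for any $\eta>0$, which is unacceptable. To avoid this I would expand both $\norm{f(A)-f(A)P_W}_F^2$ and $\norm{B-BP_W}_F^2$ as traces, split the difference into terms that are linear and quadratic in $E$, and bound each separately using $\norm{E_i}_2 \le \gamma\norm{f(A_i)}_2$; the linear cross-terms can then be absorbed into the sampling-error magnitude $\frac{k}{sc}(1+\gamma)^2\norm{f(A)}_F^2$ via the AM--GM style estimate $2ab\le \tfrac{k}{sc}a^2 + \tfrac{sc}{k}b^2$, chosen so that the $a^2$ contribution is swallowed by the additive error and the $b^2$ contribution is only a lower-order term.
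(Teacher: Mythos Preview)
Your route via the deviation $\|S^\top S - B^\top B\|_F$ does not recover the bound stated in the theorem. After Markov's inequality your variance estimate gives $\|S^\top S - B^\top B\|_F \le \sqrt{10}\,(1+\gamma)^2\|f(A)\|_F^2/\sqrt{sc}$, so the key lemma yields an additive error of order $\sqrt{k/(sc)}\,(1+\gamma)^2\|f(A)\|_F^2$, not $k/(sc)$. This is the well-known gap between the ``Gram-matrix'' analysis and the original FKV analysis: the former needs $s=\Theta(k/\eps^2)$ samples for additive error $\eps\|f(A)\|_F^2$, the latter only $s=\Theta(k/\eps)$. Your final sentence ``collecting the terms produces the additive error $\tfrac{10k}{sc}(1+\gamma)^2\|f(A)\|_F^2$'' is therefore not justified by the argument you outlined.

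The paper avoids your ``main obstacle'' entirely by never passing through $B$. Writing $G=f(A)=\sum_t \sigma_t u_t v_t^\top$, it defines for each $t\le k$ the random vector
\[
w_t \;=\; \frac{1}{s}\sum_{j=1}^s \frac{(u_t)_{i_j}}{p_{i_j}}\bigl(G_{i_j}+E_{i_j}\bigr),
\]
which lies in the sample span $V$, and bounds $\E\|w_t-\sigma_t v_t^\top\|_2^2 \le \tfrac{(1+\gamma)^2}{sc}\|G\|_F^2$ directly; the perturbation $E_i$ enters only through $\|G_i+E_i\|_2^2\le (1+\gamma)^2\|G_i\|_2^2$ in the second-moment bound. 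Then the standard FKV Pythagorean step gives $\|G-\hat B\|_F^2=\sum_{t\le k}\|w_t-\sigma_t v_t^\top\|_2^2+\sum_{t>k}\sigma_t^2$, and one concludes by Markov. No transfer between $B$ and $G$ is needed, and the $k/(sc)$ rate falls out immediately. Your proposed AM--GM patch also would not match the stated bound in general: splitting the cross term as $\tfrac{k}{sc}a^2+\tfrac{sc}{k}b^2$ with $b=\gamma\|G\|_F$ leaves a residual $\tfrac{sc}{k}\gamma^2\|G\|_F^2$, which is not dominated by $\tfrac{k}{sc}(1+\gamma)^2\|G\|_F^2$ unless $\gamma$ happens to be $O(k/(sc))$.
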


The theorem shows that the subspace spanned by a sample of columns chosen according to \eqref{eqn:sampling_distribution} contains an approximation to $f(A)$ that is nearly the best possible. Note that if the top $k$ right singular vectors of $S$ belong to this subspace, then $f(A)\sum_{t=1}^k v_t v_t^\top$ would provide the required approximation to $f(A)$ and we would be done. 

Now, the difference between Theorem~\ref{thm:main-sampling} and the assumption~\eqref{eqn:sampling_distribution} is that we do not have control over $p_i$ for an $O(\eps)$-fraction of the rows (in squared row norm contribution) in Theorem~\ref{thm:main-sampling}. Let $A'$ be the submatrix of $A$ after removing those rows, then $\norm{f(A)}_F\leq (1+O(\eps))\norm{f(A')}_F$. We can apply Theorem~\ref{thm:low-rank-raw} to $A'$ and take more samples such that we obtain $s$ rows from $A'$ (which holds with $1-\exp(-\Omega(s))$ probability by a Chernoff bound).
We therefore have the following corollary.

\begin{corollary}\label{cor:low-rank}
Let $y_i$'s be as in Algorithm~\ref{alg:low-rank} and $c$ and $\eps$ be as in Theorem~\ref{thm:main-sampling}. It holds with probability at least $0.7$ that
\[
\Norm{f(A)-f(A)\sum_j y_j y_j^\top}_F^2 
\leq \min_{D: \rk(D) \leq k}\Norm{f(A) - D}_F^2  + \left( \frac{30k}{sc} + \eps\right)\Norm{f(A)}_F^2.
\]
\end{corollary}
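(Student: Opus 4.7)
The plan is to bootstrap Theorem~\ref{thm:low-rank-raw} by passing to the submatrix of ``good'' rows on which the H-Sketch samples truly obey the distribution~\eqref{eqn:sampling_distribution}. Let $I\subseteq [n]$ be the index set furnished by Theorem~\ref{thm:main-sampling}, satisfying $\sum_{i\in I}\|f(A_i)\|_2^2 \geq (1-6\eps)M$, and let $A'$ denote the matrix obtained from $A$ by zeroing out every row with index outside $I$. Immediately one has
\[
(1-6\eps)\|f(A)\|_F^2 \leq \|f(A')\|_F^2 \leq \|f(A)\|_F^2.
\]

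First I would verify that, conditional on the sampled index landing in $I$, the hypotheses of Theorem~\ref{thm:low-rank-raw} hold for $A'$. Let $q=\Pr\{u\in I\}$: because Theorem~\ref{thm:main-sampling} gives $\Pr\{u=i\} \geq c\|f(A_i)\|_2^2/M$ for each $i\in I$, we have $q\geq c(1-6\eps)$, which is bounded below by an absolute constant. The conditional probabilities then satisfy
\[
p_i/q \geq p_i \geq c\,\frac{\|f(A_i)\|_2^2}{\|f(A)\|_F^2} \geq c(1-6\eps)\,\frac{\|f(A'_i)\|_2^2}{\|f(A')\|_F^2},
\]
which is the distributional hypothesis~\eqref{eqn:sampling_distribution} for $A'$ with constant $c'=c(1-6\eps)$. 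The entrywise noise hypothesis $\|E_i\|_2\leq \gamma\|f(A_i)\|_2$ of Theorem~\ref{thm:low-rank-raw} follows, with $\gamma=O(\sqrt{\eps})$, by combining Proposition~\ref{lem:log_additive} with the norm-preservation guarantee of Theorem~\ref{thm:main-sampling}.

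Second, I would draw $s'=\Theta(s/q)$ raw samples so that, by a Chernoff bound, at least $s$ of them land in $I$ with probability at least $0.95$. Passing extra samples to the construction of $V$ only enlarges the spanning subspace and hence can only improve the approximation, so applying Theorem~\ref{thm:low-rank-raw} to $A'$ with these $s$ in-$I$ samples yields orthonormal $y_1,\dots,y_k$ in $V$ with
\[
\|f(A') - f(A')\sum_j y_j y_j^\top\|_F^2 \leq \min_{\rk D\leq k}\|f(A')-D\|_F^2 + \frac{10k(1+\gamma)^2}{sc(1-6\eps)}\|f(A')\|_F^2
\]
with probability at least $9/10$. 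Third, I would transfer this bound back to $f(A)$. Writing $P=\sum_j y_j y_j^\top$, since $I-P$ has operator norm at most $1$ and $f(A),f(A')$ agree on the rows indexed by $I$ while $f(A')$ vanishes outside $I$,
\[
\|f(A)-f(A)P\|_F^2 - \|f(A')-f(A')P\|_F^2 = \sum_{i\notin I}\|f(A_i)(I-P)\|_2^2 \leq \sum_{i\notin I}\|f(A_i)\|_2^2 \leq 6\eps\|f(A)\|_F^2.
\]
Combined with $\min_{\rk D\leq k}\|f(A')-D\|_F^2 \leq \min_{\rk D\leq k}\|f(A)-D\|_F^2$ (zero out the corresponding rows of the optimal rank-$k$ approximation of $f(A)$) and $\|f(A')\|_F^2\leq \|f(A)\|_F^2$, absorbing the factor $(1+\gamma)^2/(1-6\eps)$ into the leading constant (raising $10$ to $30$) and the stray $6\eps\|f(A)\|_F^2$ into the $\eps\|f(A)\|_F^2$ slack (via a mild rescaling of the $\eps$ fed into Theorem~\ref{thm:main-sampling}) yields the stated bound; a union bound over the success events in Theorems~\ref{thm:main-sampling} and~\ref{thm:low-rank-raw} and the Chernoff event gives the $0.7$ success probability.

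The main obstacle I anticipate is verifying the entrywise noise hypothesis of Theorem~\ref{thm:low-rank-raw}. Theorem~\ref{thm:main-sampling} only promises that $\|f(v_{j^*,i^*})\|_2^2$ lies within $(1\pm O(\eps))$ of $\|f(A_u)\|_2^2$, not that the two vectors themselves are close. To extract a bound of the form $\|f(v)-f(A_u)\|_2\leq\gamma\|f(A_u)\|_2$, one must open the H-Sketch and exploit that the returned $v$ equals $A_u$ plus a controlled collision term $z$ with $\|f(z)\|_2^2\ll\|f(A_u)\|_2^2$ for $u$ drawn from a sufficiently heavy level set; Proposition~\ref{lem:log_additive} then yields $\|f(A_u+z)-f(A_u)\|_2\leq \|f(z)\|_2$, and Lemma~\ref{lem:add_noise} supplies the quantitative dominance needed to conclude $\gamma=O(\sqrt{\eps})$.
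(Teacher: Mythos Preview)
Your proposal is correct and follows essentially the same route as the paper: restrict to the submatrix $A'$ on the good index set $I$, use a Chernoff bound to guarantee enough samples land in $I$, apply Theorem~\ref{thm:low-rank-raw} to $A'$, and then add back the contribution of the deleted rows (at most $O(\eps)\|f(A)\|_F^2$) using that $I-P$ is a contraction. The paper's proof is terser --- it draws $s$ samples and works with the $s/2$ that land in $A'$, while you oversample by a $\Theta(1/q)$ factor to get $s$ good ones --- but the skeleton is identical, including the observation that extra samples only enlarge the span $V$ and hence cannot hurt the existence of good $y_j$'s.

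One remark on your ``main obstacle'': you are right that Theorem~\ref{thm:main-sampling} as stated only controls $\|f(v)\|_2$ and not $\|f(v)-f(A_u)\|_2$, and that one must go back into the $H$-\textsf{Sketch} to see $v=A_u+z$ with $\|f(z)\|_2$ small, then use $|f(A_u+z)-f(A_u)|\leq f(|z|)$ entrywise from Proposition~\ref{lem:log_additive}. The paper's proof sweeps this under the single sentence ``all samples have the form $f(A_i)+E_i$ with small $\|E_i\|_2$,'' so your more explicit treatment is an improvement in rigor. (A minor quibble: tracing the parameters through Lemma~\ref{lem:add_noise}, the $(1\pm O(\eps))$ norm preservation actually forces $\xi=O(\eps^{3/2})$, so $\gamma=O(\eps^{3/2})$ rather than $O(\sqrt{\eps})$; either way $(1+\gamma)^2$ is absorbed into the constant $30$.)
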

\begin{proof}
First, it follows from a Chernoff bound and a union bound that we can guarantee with probability at least $0.9$ that all samples have the form $f(A_i)+E_i$ with small $\norm{E_i}_2$. Then, it follows from another Chernoff bound that with probability at least $0.9$, it holds that there are $s/2$ samples from $A'$. We apply Theorem~\ref{thm:low-rank-raw} to $A'$ and $s/2$ and obtain that
\[
\Norm{f(A')-f(A')\sum_j y_j y_j^\top}_F^2 \leq \min_{D: \rk(D) \leq k}\Norm{f(A') - D}_F^2  + \frac{30k}{sc}\Norm{f(A')}_F^2.
\]
Suppose that $A''$ is the submatrix of $A$ which consists of the rows of $A$ that are not in $A'$. Then $f(A)$ is the (interlacing) concatenation of $f(A')$ and $f(A'')$. Since $\norm{f(A'')}_F^2\leq \eps\norm{f(A)}_F^2$ and $y_1,\dots,y_k$ remains valid if we add more samples,
\begin{align*}
&\quad\ \Norm{f(A)-f(A)\sum_j y_j y_j^\top}_F^2 \\
&= \Norm{f(A')-f(A')\sum_j y_j y_j^\top}_F^2 + \Norm{f(A'')-f(A'')\sum_j y_j y_j^\top}_F^2\\
&\leq \min_{D: \rk(D) \leq k}\Norm{f(A') - D}_F^2 + \frac{30k}{sc}\Norm{f(A)}_F^2 + \Norm{f(A'')}_F^2 \\
&\leq \min_{D: \rk(D) \leq k}\Norm{f(A) - D}_F^2 + \left(\frac{30k}{sc} + \eps\right)\Norm{f(A)}_F^2.
\end{align*}
The overall failure probability combines that of Theorem~\ref{thm:main-sampling}, Theorem~\ref{thm:low-rank-raw} and the events at the beginning of this proof.

For the second result, take $s=O(k/\eps)$ and rescale $\eps$.
\end{proof}

Note that Algorithm~\ref{alg:h-sketch-sample} can be easily modified to return the sampling probability of the sampled column, which is just $\lambda_i^{(j)}/\sum_j\wtilde{M}_j$. However, for each sample, we may lose control of it with a small constant probability. To overcome this, inspecting the proof of $H$-\textsf{Sketch}, we see that for fixed stream downsampling and fixed $\zeta$ in Algorithm~\ref{alg:h-sketch-sample}, repeating each heavy hitter structure $O(\log(L/\delta))$ times and taking the median of each $\wtilde{M}_j$ will lower the failure probability of estimating the contribution of each important level to $\delta/(L+1)$, allowing for a union bound over all levels. Hence, with probability at least $1-\delta$, we can guarantee that we obtain 
a $(1\pm O(\eps))$-approximation to $\norm{(f(A))_u}_2^2$ and thus a $(1\pm O(\eps))$-approximation to $\norm{f(A)}_F^2$. Hence the returned $\hat p_u$ is a $(1\pm O(\eps))$-approximation to the true row-sampling probability $p_u = \norm{(f(A))_u}_2^2/\norm{f(A)}_F^2$. Different runs of the sampling algorithm may produce different values of $\hat p_u$ for the same $u$ but they are all $(1\pm O(\eps))$-approximations to $p_u$. We can guarantee this for all our $s$ samples by setting $\delta=O(1/s)$, which allows for a union bound over all $s$ samples.

Therefore, at the cost of an extra $O(\log s)$ factor in space, we can assume that $\hat p_u = (1\pm O(\eps))p_u$ for all $s$ samples. The overall algorithm is presented in Algorithm~\ref{alg:low-rank}. 

\begin{algorithm}[t]
\caption{Rank-$k$ Approximation} \label{alg:low-rank}
\textbf{Input:} $A \in \mathbb{\R}^{n\times n}$, rank parameter $k$, number of samples $s$
\begin{algorithmic}[1]
	\State Initialize $s$ parallel instances of (modified) Algorithm~\ref{alg:h-sketch-sample}
	\State Let $(h_1, \hat p_1),\dots,(h_q,\hat p_q)$ be the returned vectors and the sampling probability from the $s$ instances of (modified) Algorithm~\ref{alg:h-sketch-sample}
	\State $F\gets$ concatenated matrix $\begin{pmatrix} \frac{h_1}{\sqrt{s\hat p_1}} & \cdots & \frac{h_s}{s\hat p_s}\end{pmatrix}$
	\State Compute the top $k$ left singular vectors of $F$, forming $L\in \R^{n\times k}$
	\State \Return $L$
\end{algorithmic}
\end{algorithm}

The following main theorem follows from Corollary~\ref{cor:low-rank} and the argument in~\cite{fkv04}.

\begin{theorem}
Let $s=O(k/\eps)$ be the number of samples and $y_1,\dots,y_k$ be the output of Algorithm~\ref{alg:low-rank}. It holds with probability at least $0.7$ that
\[
\Norm{f(A) - L L^\top f(A)}_F^2 
\leq \min_{D: \rk(D) \leq k}\Norm{f(A) - D}_F^2  + \eps\Norm{f(A)}_F^2.
\]
\end{theorem}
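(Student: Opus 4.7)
The plan is to combine Corollary~\ref{cor:low-rank} (which supplies the \emph{existence} of a good rank-$k$ approximation inside the span of the samples) with the classical Frieze--Kannan--Vempala argument~\cite{fkv04} showing that the top-$k$ left singular vectors of the sampled matrix $F$ realise such an approximation, plus a spectral concentration step that bridges $F$ and $f(A)$.

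First, with $s = \Theta(k/(c\eps))$, Corollary~\ref{cor:low-rank} gives with probability at least $0.7$ orthonormal vectors $y_1,\dots,y_k \in V := \mathrm{colspan}(F)$ such that, writing $\Pi_Y = \sum_j y_j y_j^\top$,
\[
\Norm{f(A) - \Pi_Y f(A)}_F^2 \le \min_{D:\,\rk(D)\le k}\Norm{f(A)-D}_F^2 + O(\eps)\,\Norm{f(A)}_F^2,
\]
where the $30k/(sc)$ term of the corollary has been absorbed into $O(\eps)$ by the choice of $s$. Next, since $L$ consists of the top-$k$ left singular vectors of $F$, the Eckart--Young theorem yields $\Norm{L^\top F}_F^2 \ge \Norm{U^\top F}_F^2$ for every orthonormal $n\times k$ matrix $U$; in particular $\Norm{L^\top F}_F^2 \ge \Norm{Y^\top F}_F^2$ because the columns of $Y=[y_1\mid\cdots\mid y_k]$ lie in $V=\mathrm{colspan}(F)$.

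The crucial bridge is the concentration of $FF^\top$ around $f(A)f(A)^\top$. Since $p_i \ge c\,\Norm{f(A_i)}_2^2/\Norm{f(A)}_F^2$ from Theorem~\ref{thm:main-sampling}, $\hat p_t = (1\pm O(\eps))p_{i_t}$ from the modified $H$-\textsf{Sketch} discussed just before Algorithm~\ref{alg:low-rank}, and $\Norm{E_{i_t}}_2 = O(\eps)\Norm{f(A_{i_t})}_2$, a matrix-Bernstein / second-moment argument in the spirit of the proof of Theorem~\ref{thm:low-rank-raw} gives $\Norm{FF^\top - f(A)f(A)^\top}_F = O(\eps/\sqrt{k})\,\Norm{f(A)}_F^2$ with high probability. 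By trace cyclicity, for every orthonormal $U\in\R^{n\times k}$,
\[
\bigl|\,\Norm{U^\top F}_F^2 - \Norm{U^\top f(A)}_F^2\,\bigr| = \bigl|\mathrm{tr}\bigl(U^\top(FF^\top - f(A)f(A)^\top)U\bigr)\bigr| \le \sqrt{k}\,\Norm{FF^\top - f(A)f(A)^\top}_F \le O(\eps)\,\Norm{f(A)}_F^2.
\]
Applying this uniform bound with $U=L$ and with $U=Y$, chaining through the Eckart--Young inequality, and using the Pythagorean identity $\Norm{f(A)-UU^\top f(A)}_F^2 = \Norm{f(A)}_F^2 - \Norm{U^\top f(A)}_F^2$ gives
\[
\Norm{f(A) - LL^\top f(A)}_F^2 \le \Norm{f(A) - \Pi_Y f(A)}_F^2 + O(\eps)\,\Norm{f(A)}_F^2 \le \mathrm{OPT}_k + O(\eps)\,\Norm{f(A)}_F^2,
\]
and a constant rescaling of $\eps$ together with a union bound over the three probabilistic events (Theorem~\ref{thm:main-sampling}, Corollary~\ref{cor:low-rank}, and the concentration step) yields the stated $0.7$ success probability.

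The hard part is the \emph{noisy} nature of the samples. Unlike in classical FKV, each $h_t$ is only approximately $f(A_{i_t})$ and each $\hat p_t$ only multiplicatively approximates $p_{i_t}$, so both the existence argument (already folded into Corollary~\ref{cor:low-rank} via the $(1+\gamma)^2$ factor of Theorem~\ref{thm:low-rank-raw}) and the spectral concentration step above must be verified to be robust to these perturbations. The nonlinearity of $f$ blocks any sketch-based linearisation that would reduce to the noiseless case, which is precisely why the $H$-\textsf{Sketch} machinery is needed upstream; but because both perturbation levels are $O(\eps)$, every induced error term remains of order $\eps\,\Norm{f(A)}_F^2$ and the FKV framework carries through essentially unchanged.
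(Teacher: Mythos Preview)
Your strategy—use Corollary~\ref{cor:low-rank} for the existence of a good $Y$ with columns in $V=\operatorname{colspan}(F)$, then bridge from $Y$ to the top-$k$ left singular vectors $L$ of $F$ via concentration of $FF^\top$ around $f(A)f(A)^\top$—is precisely the ``argument in~\cite{fkv04}'' the paper invokes. However, the concentration bound you assert is quantitatively wrong for $s=\Theta(k/(c\eps))$. The second-moment calculation gives
\[
\E\Norm{FF^\top - f(A)f(A)^\top}_F^2 \;\le\; \frac{C}{sc}\,\Norm{f(A)}_F^4 \;=\; O\!\left(\frac{\eps}{k}\right)\Norm{f(A)}_F^4,
\]
so Markov's inequality only yields $\Norm{FF^\top - f(A)f(A)^\top}_F = O(\sqrt{\eps/k})\,\Norm{f(A)}_F^2$, not $O(\eps/\sqrt{k})\,\Norm{f(A)}_F^2$ as you claim. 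Plugging the correct rate into your trace inequality gives an excess of $\sqrt{k}\cdot O(\sqrt{\eps/k})\,\Norm{f(A)}_F^2 = O(\sqrt{\eps})\,\Norm{f(A)}_F^2$, a square-root loss. Matrix Bernstein does not rescue this either: forcing $\|FF^\top - f(A)f(A)^\top\|_2 \le (\eps/k)\Norm{f(A)}_F^2$ (so that the bound $k\|\cdot\|_2$ works) requires $s=\Omega(k^2/\eps^2)$, since each summand has operator norm $O(\Norm{f(A)}_F^2/c)$ and variance proxy $O(\Norm{f(A)}_F^4/c)$.

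In short, your bridge from $Y$ to $L$ is the right idea and matches the paper's outline, but it delivers the stated $\eps$-error only with $s=\Theta(k/\eps^2)$; at $s=O(k/\eps)$ it gives additive error $O(\sqrt{\eps})\,\Norm{f(A)}_F^2$. The paper's own one-line proof is equally terse and subject to the same tension: the $O(k/\eps)$ sample count is what the \emph{existence} result (Theorem~\ref{thm:low-rank-raw} / Corollary~\ref{cor:low-rank}) needs, whereas the standard FKV/Drineas--Kannan--Mahoney passage from ``a good rank-$k$ subspace exists in $V$'' to ``the top-$k$ SVD of $F$ finds one'' costs an additional $1/\eps$. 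So your write-up is faithful to the intended route, but the specific rate you wrote for the concentration step cannot hold at the stated sample size, and the chain as written does not establish the theorem with $s=O(k/\eps)$.
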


\section{Experiments}
To demonstrate the benefits of our algorithm empirically, we conducted experiments on  low-rank approximation with a real NLP data set and used the function $f\left( x\right) =\log{\left(\left| x\right| +1 \right) } $. 

The data we use is based on the Wikipedia data used by Liang et al.~\cite{lswyy20}. The data matrix $A'\in\R^{n\times n}$ ($n=10^4$) contains information about the correlation among the $n$ words. Its entries are $A'_{i,j}=p_{j}\log(N_{i,j}N/(N_{i}N_{j})+1)$, where $N_{i,j}$ is the number of times words $i$ and $j$ co-occur in a window size of $10$, $N_{i}$ is the number of times word $i$ appears and $N_{i}$'s are in a decreasing order, $N = \sum_i N_i$  and $p_{j} = \max\{1, (N_j/N_{10})^2 \}$ is a weighting factor which adds larger weights to more frequent words. Since $A'$, and thus $f(A')$, have almost the same column norms, we instead consider $A = A' - \mathbf{1}\mathbf{1}^\top A'$, where $\mathbf{1}$ is the vector of all $1$ entries.

We compare the accuracy and runtime with the previous three-pass algorithm of Liang et al.~\cite{lswyy20}. The task is to find $L\in \R^{n\times k}$ with orthornormal columns to minimize the error ratio
\[
e(L) = \frac{ \Norm{f(A) - LL^\top f(A)}_{F} }{ \Norm{f(A) - UU^\top f(A)}_{F} },
\]
where $U\in \R^{n\times k}$ has the top $k$ left singular vectors of $f(A)$ as columns. The numerator $\norm{f(A) - LL^\top f(A)}_F$ is the approximation error and the denominator $\norm{f(A) - UU^\top f(A)}_F$ is the best approximation error, both in Frobenius norm.

\subsection{Algorithm Implementation}
We present our empirical results for the one-pass algorithm and a faster implementation of the two-pass algorithm in Figure \ref{fig:accuracy} and Figure \ref{fig:time}. Both algorithms run in $0.1\%$ of the runtime of the three-pass algorithm of Liang et al.~\cite{lswyy20}. The one-pass algorithm is less accurate than the two-pass algorithm when the space usage is small, which is not unexpected, because the second pass enables noiseless column samples. Still, as discussed below, one-pass algorithms are essential in certain internet NLP applications. Even our two-pass algorithm has a considerable advantage over the prior three-pass algorithm, by matching its accuracy in significantly less time and one fewer pass. 

For the two-pass algorithm, we sample the columns of $A$ using Algorithm~\ref{alg:h-sketch-sample} in the first pass and only recover the positions of the heavy columns in each magnitude level. Taking $s$ samples will incur $O(s\poly(\eps^{-1}\log n))$ heavy hitters in total. In the second pass we obtain precise $f(A)$ values for our samples and thus noiseless column samples. Then we calculate the sampling probability $\hat p_u$ according to the error-free column norms.

To reduce the runtime, we do not run $s$ independent copies of Algorithm~\ref{alg:h-sketch-sample} for $s$ samples; instead, we take $m$ samples from each single run of Algorithm~\ref{alg:h-sketch-sample} and run $s/m$ independent copies of Algorithm~\ref{alg:h-sketch-sample}. Hence the $s$ samples we obtain are not fully independent. 

All experiments are conducted under MATLAB 2019b on a laptop with a 2.20GHz CPU and 16GB RAM.

We set $k=10$, $m=100$ and plot the error ratios of our algorithm in Figure~\ref{fig:accuracy}. For each value of space usage, the mean and standard deviation are reported from 10 independent runs. In the same figure, we also plot the results of the three-pass algorithm of Liang et al.~\cite{lswyy20} at comparable levels of space usage. Since the three-pass algorithm is considerably slower, we run the three-pass algorithm only once. Additionally we plot the runtimes of all algorithms in Figure~\ref{fig:time}. 

We can observe that even at the space usage of approximately $12\%$ of the input data, the error ratio of our two-pass algorithm is stably around $1.05$. The one-pass algorithm is less accurate than the one-pass algorithm when the space usage is less than $20\%$, which is not unexpected, because the second pass enables noiseless column samples. Overall, the error ratio of the one- and two-pass algorithms is similar to that of the three-pass algorithm for space usage level at least $0.2$, while the runtime of both algorithms is at most $0.1\%$ of that of the three-pass algorithm, which is a significant improvement.

\begin{figure}
    \centering
    \begin{tikzpicture}
\begin{axis}[ymin = 0.95, ymax = 1.35, xmin = 0, xmax = 0.4, 
             xlabel={space usage}, ylabel={error ratio},
             tick label style={/pgf/number format/fixed},
             width = 10cm, height = 5cm]
\addplot[color=teal, mark=triangle,]
 plot [error bars/.cd, y dir = both, y explicit]
 table[row sep=crcr, x index=0, y index=1, y error index=2]{
0.3570 1.031855 0.035195 \\
0.2940 1.031427 0.019557 \\
0.2380 1.026453 0.019931 \\
0.1760 1.059894 0.078924 \\
0.1200 1.104430 0.084912 \\
0.0630 1.201258 0.137502 \\
};
\addplot[color=blue, mark=*,]
 plot [error bars/.cd, y dir = both, y explicit]
 table[row sep=crcr, x index=0, y index=1, y error index=2]{
0.3564 1.030569 0.019294 \\
0.297  1.029355 0.012112 \\
0.2376 1.032825 0.033609 \\
0.1782 1.041561 0.020620 \\
0.1188 1.050296 0.022213 \\
0.0594 1.149146 0.201046 \\
};
\addplot[color=magenta, mark=o,]
 plot [error bars/.cd, y dir = both, y explicit]
 table[row sep=crcr, x index=0, y index=1]{
0.3564 1.034684 \\
0.297 1.034686  \\
0.2376 1.034694 \\
0.1782 1.038564 \\
0.1188 1.038539 \\
0.0594 1.032188  \\
};
\addlegendentry{one-pass alg.};
\addlegendentry{two-pass alg.};
\addlegendentry{three-pass alg.};
\end{axis}
\end{tikzpicture}
    \vspace{-0.5cm}
    \caption{Error ratios of the one-pass, two-pass and three-pass algorithms. The $x$-axis is the ratio between the space of the sketch maintained by the tested algorithm and the space to store the input matrix. The $y$-axis is the error ratio $e(L)$. Solid dots denote the mean of the error ratios over $10$ independent trials and the vertical bars denote the standard deviation of the one-pass and two-pass algorithms.}
    \label{fig:accuracy}
\end{figure}
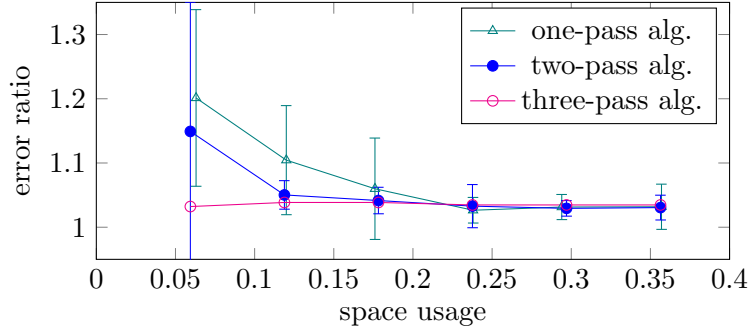

    \pgfplotsset{
    every non boxed x axis/.style={} 
}
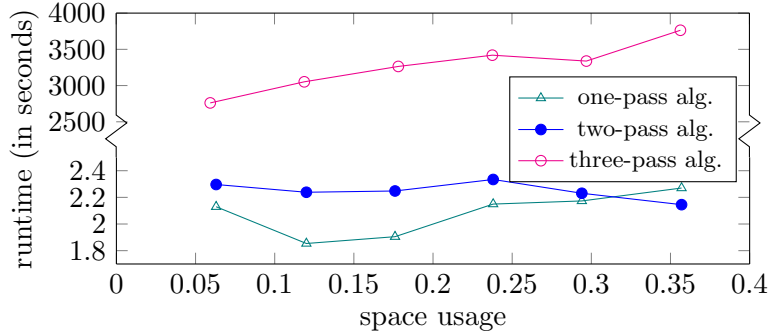
\begin{figure}
    \centering
    \begin{tikzpicture}
    \begin{groupplot}[
    group style={
        group name=runtime,
        group size=1 by 2,
        vertical sep=0pt,
        xticklabels at=edge bottom
    },
    xmin=0, xmax=0.4,
    width = 10cm,
    tick label style={/pgf/number format/.cd, set thousands separator={}, fixed},
    legend style={at={(0.8,1.76)},anchor=north},
    legend style={font=\footnotesize},
    groupplot ylabel = {runtime (in seconds)},
    ]
    \nextgroupplot[ymin = 2010, ymax = 4000, axis x line=top,
                    axis y discontinuity=crunch, height = 3.5cm]
\addplot[color=magenta, mark=o]
 plot [error bars/.cd, y dir = both, y explicit]
 table[row sep=crcr, x index=0, y index=1]{
0.3564 3763.1446 \\
0.297  3338.0686  \\
0.2376 3419.3905 \\
0.1782 3263.936 \\
0.1188 3051.819 \\
0.0594 2760.607  \\
};\label{plots:3pass_runtime}
    \nextgroupplot[ymin = 1.7, ymax = 2.5, axis x line=bottom, xlabel={space usage},
                height = 3cm]
\addplot[color=teal, mark = triangle]
 plot [error bars/.cd, y dir = both, y explicit]
 table[row sep=crcr, x index=0, y index=1]{
0.3570 2.270008  0.103672 \\
0.2940 2.173176  0.054678 \\
0.2380 2.149617  0.080564 \\
0.1760 1.904915  0.080979 \\
0.1200 1.853819  0.045922 \\
0.0630 2.129655  0.042163 \\
};\label{plots:1pass_runtime}
\addplot[color=blue, mark = *]
 plot [error bars/.cd, y dir = both, y explicit]
 table[row sep=crcr, x index=0, y index=1]{
0.3570  2.145211  0.092794 \\
0.2940  2.230715  0.222024 \\
0.2380  2.334207  0.374474 \\
0.1760  2.247841  0.207320 \\
0.1200  2.238326  0.158536 \\
0.0630  2.296440  0.160212 \\
};\label{plots:2pass_runtime}
\addlegendentry{one-pass alg.}
\addlegendentry{two-pass alg.};
\addlegendimage{/pgfplots/refstyle=plots:3pass_runtime,mark=o}\addlegendentry{three-pass alg.}
    \end{groupplot}
    \pgfresetboundingbox\path (runtime c1r1.outer north west) rectangle (runtime c1r2.outer south east); 
\end{tikzpicture}
    \vspace{-0.5cm}
    \caption{Runtimes of the one-pass, two-pass and three-pass algorithms. The $x$-axis is the ratio between the space of the sketch maintained by the tested algorithm and the space to store the input matrix. The $y$-axis is the runtime in seconds. Solid dots denote the mean of $10$ independent trials of the one-pass and two-pass algorithms. The standard deviations are all less than $1.5$ seconds and thus omitted. }
    \label{fig:time}
\end{figure}

\section{Application to Linear Regression}\label{sec:regression}

In this section, we consider approximately solving the linear regression problem using the $H$-Sketch from Section~\ref{sec:h-sketch}. 

We shall need to sample rows from the concatenated matrix $Q = \begin{pmatrix} f(A) & b\end{pmatrix}$, where $A\in \R^{n\times d}$ and $b\in \R^n$ are given in a turnstile stream, and $f(x) = \ln(1+|x|)$ is the transformation function. This can still be achieved using the same $H$-\textsf{Sketch} in Section~\ref{sec:h-sketch}, applied to the concatenated matrix $\begin{pmatrix}A & b\end{pmatrix}$, with the transformation function $f(x)$ $(x\in\R^{d+1})$ replaced with $g(x) = \begin{pmatrix}f(x_{1:d}) & x_{d+1}\end{pmatrix}$, where $x_{1:d}$ denotes the first $d$ coordinate of $x$ and $x_{d+1}$ the last coordinate of $x$.
Then, using identical arguments in Section~\ref{sec:h-sketch} for the squared $\ell_2$-norm on the first $d$ coordinates and a standard \textsf{Count-Sketch} argument for the last coordinate, it is straightforward to show an analogous version of Theorem~\ref{thm:main-sampling} as below. We omit the identical proof. 
\begin{theorem}\label{thm:concatenated_sampling}
Let $\eps>0$, and let $H_{j_0,i_0}$ be the vector returned by Algorithm~\ref{alg:h-sketch-sample}. With probability at least $0.9$, it holds that there exists $u\in [n]$ such that 
\[
(1-O(\eps))\left(\Norm{f(A_u)}_2^2 + \Abs{b_u}^2 \right) \leq \Norm{g(H_{j_0,i_0})}_2^2 
 \leq (1+O(\eps))\left(\Norm{f(A_u)}_2^2 + \Abs{b_u}^2\right).
\]
\end{theorem}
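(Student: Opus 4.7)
The plan is to mirror the proof of Theorem~\ref{thm:main-sampling} with $f$ replaced by $g$ and the reference mass $M=\|f(A)\|_F^2$ replaced by $M'=\|g(Q)\|_F^2=\|f(A)\|_F^2+\|b\|_2^2$. Accordingly, redefine the set of heavy columns as $I'_\eps=\{u\in[n]:\|g(Q_u)\|_2^2\geq \eps M'\}$ where $Q_u=(A_u^\top,\,b_u)^\top$, and redefine the level sets $S_j$, important levels $\cJ$, and thresholds $T_j$ with $M'$ in place of $M$. Since Algorithm~\ref{alg:h-sketch-extended} and Algorithm~\ref{alg:h-sketch-sample} are generic in the signal they are fed, running them on $(A\;b)$ with the $g$-based norm produces an output of the stated form, and it remains to verify the three building-block lemmas (\ref{lem:heavy_no_collision}, \ref{lem:HH_single_repetition}, \ref{lem:HH-guarantee}) in the concatenated setting.

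Lemma~\ref{lem:heavy_no_collision} is a pure hashing argument and transfers verbatim since $|I'_{\alpha\phi}|\leq 1/(\alpha\phi)$ in the new normalization. For Lemma~\ref{lem:HH_single_repetition}, decompose each bucket sum block-wise: $\|g(\sum_i \mathbf{1}_{\{h(i)=v\}}\eps_i Q_i)\|_2^2 = \|f(\sum_i \mathbf{1}_{\{h(i)=v\}}\eps_i A_i)\|_2^2+(\sum_i\mathbf{1}_{\{h(i)=v\}}\eps_i b_i)^2$. Take expectations: the first term is bounded by $C\cdot \E_h Z_v^{(A)}$ via Lemma~\ref{lem:light_noise_f} coordinatewise exactly as in the original proof, while the second term equals $\E_h \sum_i \mathbf{1}_{\{h(i)=v\}} b_i^2$ by 2-wise independence of $\{\eps_i\}$ (a standard \textsf{Count-Sketch} bound). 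Both are at most $O(M'/w)$, so the Markov step goes through with the same constants.

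The main technical step is the concatenated analogue of Lemma~\ref{lem:add_noise} needed inside the proof of Lemma~\ref{lem:HH-guarantee}: if $y,z\in\R^{d+1}$ satisfy $\|g(y)\|_2^2\geq \xi^{-2}\|g(z)\|_2^2$, then $(1-O(\xi^{2/3}))\|g(y)\|_2^2\leq \|g(y+z)\|_2^2\leq (1+O(\xi))\|g(y)\|_2^2$. Writing $y=(y_1,y_2)$, $z=(z_1,z_2)$ with $y_1,z_1\in\R^d$, $y_2,z_2\in\R$, and setting $a=\|f(y_1)\|_2^2$, $b'=y_2^2$, $c=\|f(z_1)\|_2^2$, $d'=z_2^2$, the hypothesis gives $c+d'\leq \xi^2(a+b')$, hence $c\leq \xi^2(a+b')$ and $d'\leq \xi^2(a+b')$. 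For the linear coordinate, elementary expansion yields $(y_2+z_2)^2=b'\pm 2\xi(a+b')\pm \xi^2(a+b')$. For the $f$-block, I split into cases: if $a\geq \xi^{2/3}(a+b')$, then $c\leq \xi^{4/3}a$ and Lemma~\ref{lem:add_noise} applies directly to $(y_1,z_1)$ with parameter $\xi'=\xi^{2/3}$; otherwise $a\leq \xi^{2/3}(a+b')$, and Proposition~\ref{lem:log_additive} crudely bounds $\|f(y_1+z_1)\|_2^2\leq 2(a+c)\leq O(\xi^{2/3})(a+b')$, while $a$ itself is already $O(\xi^{2/3})(a+b')$, so the perturbation is absorbed into the slack. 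Summing both block contributions gives the claimed $g$-norm two-sided bound.

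With this concatenated Lemma~\ref{lem:add_noise} in hand, the proof of Lemma~\ref{lem:HH-guarantee} transcribes unchanged, and the remainder of the proof of Theorem~\ref{thm:main-sampling}, namely the level-set accounting, the definitions of $\wtilde{M}_j$, the sampling of $(j^\ast,i^\ast)$, and the $(1\pm O(\eps))$ recovery guarantee, never invokes specific properties of $f$ beyond Lemmas~\ref{lem:HH_single_repetition} and~\ref{lem:HH-guarantee}, so it goes through verbatim and delivers the stated guarantee for $\|g(H_{j_0,i_0})\|_2^2$. The only genuinely new work is the block-decomposition argument for the noise-insertion lemma; everything else is a direct rewriting with $M,f,I$ replaced by $M',g,I'$.
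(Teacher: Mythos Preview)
Your proposal is correct and follows exactly the approach the paper sketches (the paper in fact omits the proof, saying only that it uses ``identical arguments in Section~\ref{sec:h-sketch} for the squared $\ell_2$-norm on the first $d$ coordinates and a standard \textsf{Count-Sketch} argument for the last coordinate''). You have filled in the details that the paper leaves implicit, most notably the concatenated analogue of Lemma~\ref{lem:add_noise}, which is indeed the only place where a nontrivial merge of the $f$-block and the linear coordinate is required.

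One minor remark: in your Case~1 for the $f$-block, applying Lemma~\ref{lem:add_noise} with $\xi'=\xi^{2/3}$ gives a lower bound $(1-3(\xi')^{2/3})a=(1-3\xi^{4/9})a$ and an upper bound $(1+3\xi^{2/3})a$, so the combined two-sided bound on $\|g(y+z)\|_2^2$ is $(1\pm O(\xi^{4/9}))\|g(y)\|_2^2$ rather than the $(1-O(\xi^{2/3}))$ and $(1+O(\xi))$ you wrote. This is harmless: in the proof of Lemma~\ref{lem:HH-guarantee} one simply takes $w$ large enough that $\xi^{4/9}\leq \nu/3$, i.e.\ $w=\Theta(1/(\phi\nu^{9/2}))$ rather than $\Theta(1/(\phi\nu^3))$, which only changes the polynomial dependence of the sketch length on $\nu$ and does not affect the final $\poly(\log n,1/\eps)$ conclusion.
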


Theorem~\ref{thm:concatenated_sampling} states that each sample is a noisy version of the $u$-th row of $Q$. Let $p_u = \norm{Q_u}_2^2/\norm{Q}_F^2$ be the true sampling probability of the $u$-th row. 
As argued at the beginning at Section~\ref{sec:low-rank}, we may assume, at the cost of an $O(\log s)$ factor in space, that every sample is good, i.e., the returned sampling probability $\hat{p}_u$ satisfies that $\hat{p}_u = (1\pm O(\eps))p_u$ and the noise in each sample is at most an $O(\eps)$-fraction in its $\ell_2$ norm.

Below we present our algorithm for linear regression in Algorithm~\ref{alg:linear-regression}, assuming that every sample is good in the sense that $\hat p_u = (1\pm O(\eps))p_u$ and the noise in each sample is at most an $O(\eps)$-fraction in $\ell_2$-norm. The guarantee is given in Theorem~\ref{thm:linear-regression} and the proof is deferred to Section~\ref{sec:regression_proof}.

\begin{algorithm}[tb]
\caption{Linear Regression}\label{alg:linear-regression}
\begin{algorithmic}[1]
    \Require $A\in \mathbb{R}^{n \times d}$, number of samples $s$
    \State Initialize $s$ parallel instances of Algorithm~\ref{alg:h-sketch-sample} 
    \State Run Algorithm~\ref{alg:h-sketch-sample} on the concatenated matrix $\begin{pmatrix}f(A) & b\end{pmatrix}$ to obtain $s$ row samples $h_1,\dots,h_s$ and the corresponding sampling probabilities $\hat p_1,\dots,\hat p_s$
	\State $T\gets$ vertical concatenation of $\frac{h_1}{\sqrt{s\hat p_1}},\dots, \frac{h_s}{\sqrt{s\hat p_s}}$
	\State $\tilde{M}\gets$ first $d$ columns of $T$, $\tilde{b}\gets$ last column of $T$
    \State $\tilde{x} \gets \arg \min_{x\in \mathbb{R}^d} \norm{\tilde{M}x-\tilde{b}}_2$
\end{algorithmic}
\end{algorithm}

\begin{theorem}\label{thm:linear-regression}
Given matrix $A\in \mathbb{R}^{n\times d}$ and vector $b\in \mathbb{R}^{n}$, let $\kappa=\kappa(f(A))$ be the condition number of the transformed matrix. Let $s =  O(\frac{d\kappa^2}{\eps^2} \log\frac{d}{\delta})$, then Algorithm~\ref{alg:linear-regression} outputs a vector $\tilde{x}\in \mathbb{R}^d$, which, with probability at least $1-\delta$, satisfies 
\[
    \Norm{\tilde{M}\tilde{x}-\tilde{b}}_2 \leq (1+\eps)\min_{x\in \mathbb{R}^d}\Norm{f(A)x-b}_2 +\Delta,
\]
where 
\[
\Delta = \eps \left(\sqrt{d+\frac{\norm{b}_2^2}{\norm{f(A)}_2^2}}\kappa\Norm{b}_2 + \sqrt{\norm{f(A)}_F^2+\norm{b}_2^2} \right).
\] The total space used by Algorithm~\ref{alg:linear-regression} is $\tilde{O}(d^2\kappa^2\log\frac{1}{\delta})\cdot \poly(\log n,\frac{1}{\eps})$.
\end{theorem}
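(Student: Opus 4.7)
The plan is to view the output of Algorithm~\ref{alg:h-sketch-sample} as a noisy row-sampling-and-rescaling operator acting on the concatenated matrix $Q=\begin{pmatrix} f(A) & b\end{pmatrix}$, and then to run a standard sketch-and-solve subspace-embedding analysis for $\ell_2$-regression alongside a direct bound on the sketching noise. Introduce the implicit $s\times n$ matrix $S$ whose $t$-th row has a single nonzero entry $(s\hat p_{u_t})^{-1/2}$ in column $u_t$, where $u_t$ is the row index sampled by the $t$-th instance of Algorithm~\ref{alg:h-sketch-sample}. Writing $\tilde M=Sf(A)+N_M$ and $\tilde b=Sb+N_b$, the noise matrices $N_M,N_b$ collect, row by row, the perturbations $E_t$ produced by the $H$-\textsf{Sketch}. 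From the ``good sample'' hypothesis $\norm{E_t}_2\leq O(\eps)\norm{Q_{u_t}}_2$ together with $\hat p_{u_t}=(1\pm O(\eps))\norm{Q_{u_t}}_2^2/\norm{Q}_F^2$, a direct summation yields $\norm{N_M}_F,\norm{N_b}_2\leq O(\eps)\sqrt{\norm{f(A)}_F^2+\norm{b}_2^2}$.

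Next, I invoke matrix Bernstein (Lemma~\ref{lem:matrix-bound}) to establish that $S$ is an $\eps$-subspace embedding for the column span of $Q$, i.e., $\norm{Sy}_2=(1\pm\eps)\norm{y}_2$ for every $y$ in that span. With $\tilde U$ an orthonormal basis of the column span of $Q$, the centered rank-one summands $X_t=\tilde U_{u_t}\tilde U_{u_t}^\top/p_{u_t}-I$ satisfy $\norm{\tilde U_u}_2^2/p_u\leq \norm{Q}_F^2/\sigma_d(f(A))^2$, which follows from $p_u\geq \norm{f(A)_u}_2^2/\norm{Q}_F^2$ and $\norm{f(A)_u}_2^2\geq \sigma_d(f(A))^2\norm{\tilde U_u}_2^2$. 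Plugging this into Lemma~\ref{lem:matrix-bound} with $\rho$ and $\sigma^2$ both of order $\norm{Q}_F^2/\sigma_d(f(A))^2$ shows that $s=O((d\kappa^2/\eps^2)\log(d/\delta))$ samples suffice, once the $\norm{b}_2^2/\sigma_d(f(A))^2$ contribution is absorbed into the stated $\kappa^2$ factor. Applying the embedding to the fixed residual $r^\ast=f(A)x^\ast-b\in\mathrm{colspan}(Q)$, where $x^\ast=\arg\min_x\norm{f(A)x-b}_2$, in particular yields $\norm{Sr^\ast}_2\leq (1+\eps)\norm{r^\ast}_2$.

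The conclusion is a triangle-inequality estimate using optimality of $\tilde x$ in the sketched problem:
\[
\norm{\tilde M\tilde x-\tilde b}_2\leq \norm{\tilde Mx^\ast-\tilde b}_2\leq \norm{S(f(A)x^\ast-b)}_2+\norm{N_Mx^\ast}_2+\norm{N_b}_2.
\]
The first term is at most $(1+\eps)\min_x\norm{f(A)x-b}_2$ by the previous paragraph. For the noise term, $\norm{x^\ast}_2\leq \norm{b}_2/\sigma_d(f(A))=\kappa\norm{b}_2/\norm{f(A)}_2$, so $\norm{N_Mx^\ast}_2\leq \norm{N_M}_F\norm{x^\ast}_2\leq O(\eps)\kappa\norm{b}_2\sqrt{d+\norm{b}_2^2/\norm{f(A)}_2^2}$ after using $\norm{f(A)}_F^2\leq d\norm{f(A)}_2^2$; combined with $\norm{N_b}_2\leq O(\eps)\sqrt{\norm{f(A)}_F^2+\norm{b}_2^2}$, this reproduces the claimed $\Delta$ up to a constant that can be absorbed by rescaling $\eps$. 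The principal technical obstacle is the subspace-embedding step: since the sampling distribution is proportional to $\norm{Q_u}_2^2$ rather than to leverage scores or to $\norm{f(A)_u}_2^2$, the leverage-to-probability ratio naturally contains a $\norm{b}_2^2/\sigma_d(f(A))^2$ summand that must be carefully absorbed into $d\kappa^2$, and the use of the \emph{estimated} probabilities $\hat p_u$ in place of $p_u$ introduces an additional $(1\pm O(\eps))$ slack that must be propagated through Bernstein and absorbed into the final $(1+\eps)$ factor.
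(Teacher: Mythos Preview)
Your overall plan---set up the row-sampling matrix $S$, invoke matrix Bernstein (Lemma~\ref{lem:matrix-bound}) to obtain a subspace embedding, and finish with the triangle inequality
\[
\norm{\tilde M\tilde x-\tilde b}_2\leq\norm{S(f(A)x^\ast-b)}_2+\norm{N_Mx^\ast}_2+\norm{N_b}_2
\]
---is exactly the paper's approach; the paper packages the last two terms as $\norm{\hat F}_2\sqrt{\norm{x^\ast}_2^2+1}$, but the arithmetic is the same, and your noise bound and use of $\norm{x^\ast}_2\leq\norm{b}_2/\sigma_d(f(A))$ match the paper's.

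There is, however, a genuine gap in your Bernstein step. The inequality $\norm{f(A)_u}_2^2\geq \sigma_d(f(A))^2\norm{\tilde U_u}_2^2$ is \emph{false} when $\tilde U$ is an orthonormal basis for the column span of $Q=\begin{pmatrix}f(A)&b\end{pmatrix}$ rather than of $f(A)$. For a concrete counterexample take $n=2$, $d=1$, $f(A)=e_1$, $b=e_2$, so that $Q=\tilde U=I_2$; then $\norm{f(A)_2}_2=0$ while $\sigma_1(f(A))\,\norm{\tilde U_2}_2=1$. The valid inequality in this setting is $\norm{Q_u}_2\geq\sigma_{\min}(Q)\,\norm{\tilde U_u}_2$, but $\sigma_{d+1}(Q)$ can be arbitrarily smaller than $\sigma_d(f(A))$, so this does not give the bound you need. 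Relatedly, your claim that the $\norm{b}_2^2/\sigma_d(f(A))^2$ contribution can be ``absorbed into the stated $\kappa^2$ factor'' fails whenever $\norm{b}_2^2>d\,\norm{f(A)}_2^2$, which nothing in the theorem rules out. The paper avoids the first issue by running Bernstein with $U$, the left singular vectors of $G=f(A)$ alone, for which $\norm{G_u}_2=\norm{U_u\Sigma}_2\geq\sigma_d\norm{U_u}_2$ genuinely holds; this yields an embedding for $\mathrm{colspan}(G)$ with the stated sample complexity, after which the paper passes directly to the one-sided bound on the single residual vector $Gx^\ast-b$. If you want to repair your argument, apply Bernstein with $U$ rather than $\tilde U$ and treat the extra direction $b$ separately.
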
 

Finally, we note an error in~\cite{lswyy20}. Let $G=f(A)$ and $S$ be a subspace embedding sketching matrix of $s=\poly(d/\eps)$ rows for the column space of $G$. In their proof of the regression problem in Section E, the upper bound of $C_2$ is wrong as it claims that $\norm{(\wtilde{SG})^\dagger}_2\leq C\norm{(SG)^\dagger}_2$; a correct bound should be $\norm{(\wtilde{SG})^\dagger}_2 \leq 10 nd\kappa\eta\norm{(SG)^\dagger}_2$, where $\eta = \max_{i,j} \abs{(\wtilde{SG})_{i,j} - (SG)_{i,j}}$ could be linear in $\sqrt{n}$ by their \textsc{LogSum} guarantee. The proof in~\cite{lswyy20} does not account for such a dependence on $n$ and $\kappa$. Correcting the proof would yield a similar bound as ours but with an addtive error $\Delta = n^{2} \kappa^3 \poly(\frac{\eps}{d}) \norm{b}_2$, which depends polynomially on $n$. Our additive error has no dependence on $n$ but depends on $\norm{b}_2^2/\norm{f(A)}_2^2$ and has an additional additive term of $\eps\norm{Q}_2$, which is an artefact of sampling the rows of $Q$.

\section*{Acknowledgements}
Y.~Li was partially supported and Y.~Sun was supported by Singapore Ministry of Education (AcRF) Tier 2 grant MOE2018-T2-1-013. D. Woodruff thanks support from NSF grant No. CCF-1815840, Office of Naval Research grant N00014-18-1-256, and a Simons Investigator Award.

\bibliographystyle{plain}
\bibliography{reference}

\appendix

\section{Omitted Proofs of Useful Inequalities}\label{sec:ineq proofs}
\subsection{Proof of Proposition~\ref{prop:log_ineq}}
\begin{proof}
Let $h(x) = (1+2x)\ln(1+x) - x$. Since $h(0) = 0$, it suffices to show that $h'(x) > 0$. We calculate that
\[
h'(x) = \frac{x}{1+x} + 2\ln(1+x).
\]
Since $h'(0)=0$, it suffices to show that $h''(x) > 0$. This can be readily verified by calculating that
\[
h''(x) = \frac{3+2x}{(1+x)^2} > 0.\qedhere
\]
\end{proof}

\subsection{Proof of Proposition~\ref{prop:log^2}}
\begin{proof}
Let $f(x,y) = \ln^2(1+x) + \ln^2(1+y) - \ln^2(1+\sqrt{x^2+y^2})$. It suffices to show that $f(x,y)\geq 0$. The inequality is clearly true when $x=0$ or $y=0$. Note that
\begin{align*}
\frac{\partial f}{\partial x} &= 2\left(\frac{\log(1+x)}{1+x} - \frac{x\ln(1+\sqrt{x^2+y^2})}{x^2+y^2+\sqrt{x^2+y^2}}\right)\\
\frac{\partial f}{\partial y} &= 2\left(\frac{\log(1+y)}{1+y} - \frac{y\ln(1+\sqrt{x^2+y^2})}{x^2+y^2+\sqrt{x^2+y^2}}\right)
\end{align*}
Assuming $x,y>0$, $\partial f/\partial x = \partial f/\partial y = 0$ implies that
\[
\frac{\log(1+x)}{x(1+x)} = \frac{\log(1+x)}{y(1+y)}.
\]
It is easy to verify that $\log(1+x))/(x(1+x))$ is decreasing w.r.t.\ $x$ (checking the derivative and using Proposition~\ref{prop:log^2}), so we must have $x=y$. Now, let
\[
h(x) = \frac{\partial f}{\partial x}(x,x) = \frac{2\ln(1+x)}{1+x} - \frac{\sqrt{2}\ln(1+\sqrt{2}x)}{1+\sqrt{2}x}.
\]
We shall show that $h(x) > 0$ for all $x > 0$. This will imply that $f(x,y)$ has no local minimum or maximum when $x,y>0$ and so it is easy to see that $f(x,y)$ attains the minimum at its boundary $x=0$ or $y=0$, yielding that $f(x,y)\geq 0$ for all $x,y\geq 0$.

To see that $h(x) > 0$, let
\[
g(a) = \frac{\ln(1+ax)}{a(1+ax)}.
\]
We calculate 
\[
g'(a) = \frac{ax - (1+2ax)\ln(1+ax)}{a^2(1+ax)^2}.
\]
It follows from Proposition~\ref{prop:log_ineq} that $g'(a) < 0$. Hence $g(a)$ is decreasing w.r.t.\ $a$ and $g(\sqrt{2})<g(1)$, which is exactly $\frac{1}{\sqrt 2}h(x) > 0$.
\end{proof}

\subsection{Proof of Lemma~\ref{lem:light_noise_f}}
\begin{proof}
It is clear that the base of the logarithm does not matter and we assume that the base is $e$. Let $Z = \sum_i \eps_i a_i$ and $\sigma^2 = \sum a_i^2$. Then $\E Z^2 = \sigma^2$ and $\E|Z|\leq (\E|Z|^2)^{1/2} = \sigma$. Let $g(x) = \ln(1+x)$ and
\[
Z_1 = \begin{cases}
		|Z|, & |Z| \geq e-1;\\
		0, & \text{otherwise},
	\end{cases} \quad	
Z_2 = \begin{cases}
		0, & |Z| \geq e-1;\\
		|Z|, & \text{otherwise}.
	\end{cases}
\]
Then $|Z| = Z_1 + Z_2$ and
\[
\E g(|Z|)^2 = \E(g(Z_1+Z_2))^2 \leq \E (g(Z_1)+g(Z_2))^2 \leq \E 2(g(Z_1)^2+g(Z_2)^2),
\]
where the first inequality follows from Proposition~\ref{lem:log_additive}. For the first term, we define $h(x) = g(x)\cdot \mathbf{1}_{\{x\geq e-1\}}$. Then $h(x)^2$ is concave on $[0,\infty)$. Hence
\[
\E g(Z_1)^2 = \E h(Z_1)^2 = \E h(|Z|)^2 \leq h(\E|Z|)^2 \leq h(\sigma)^2 \leq g(\sigma)^2.
\]
Next we upper bound the second term. The first case is $\sigma\leq e-1$. Since $\E Z^4\leq 3\sigma^4$, it holds that $\Pr\{Z_2\geq t\sigma\} \leq \Pr\{|Z|\geq t\sigma\}\leq 3/t^4$. Then
\begin{align*}
\E g(Z_2)^2 &\leq \E g(e-1)g(Z_2)\\ &= \E g(Z_2)\\ &= \int_0^{e-1} g(x) \Pr\{Z_2\geq x\} dx \\
&= \sigma\int_0^{(e-1)/\sigma} g(t\sigma) \Pr\{Z_2\geq t\sigma\} dt \\
&= \sigma^2\int_0^{(e-1)/\sigma} g(t) \Pr\{Z_2\geq t\sigma\} dt \quad (\text{by Proposition~\ref{lem:log_additive}})\\
&\leq \sigma^2 \left(\int_0^1 g(t) dt + 3\int_{1}^{(e-1)/\sigma} \frac{g(t)}{t^4} dt\right) \\
&\leq C_1 \sigma^2\\
&\leq C_1 (e-1)^2 g(\sigma)^2,
\end{align*}
where $C_1 > 0$ is an absolute constant and the last inequality follows from the fact that $g(x)\geq x/(e-1)$ on $[0,e-1]$. The second case is $\sigma > e-1$. In this case, 
\[
\E g(Z_2)^2\leq 1 \leq g(\sigma)^2.
\]
Therefore, we conclude that
\[
\E g(|Z|)^2 \leq 2(1 + C_1(e-1)^2) g(\sigma)^2 = C_2 g\left(\sqrt{\sum_i a_i^2}\right)^2 \leq C_2 \sum_i g(|a_i|)^2,
\]
where the last inequality follows from Proposition~\ref{prop:log^2}.
\end{proof}

\subsection{Proof of Lemma~\ref{lem:add_noise}}
\begin{proof}
We first prove the upper bound.
\begin{align*}
\Norm{f(y+z)} _{2}^{2} &= \sum_i f(y_i + z_i)^2 \\
&\le \sum_{i} \left[ f(y_i) + f(z_i) \right] ^{2} \quad (\text{Proposition~\ref{lem:log_additive}})\\
&= \sum_{i} f(y_i) ^{2} + \sum_{i}f(z_i) ^{2}+\sum_{i}2f(y_i)f(z_i) \nonumber \\
&\le \sum_{i} f(y_i) ^{2} + \xi^{2}\sum_{i}f(y_i) ^{2} + 2 \sqrt{\sum_{i} f(y_i)^{2}}  \sqrt{\sum_{i} f(z_i) ^{2}}  \quad \text{(Cauchy-Schwarz)} \\
&\le \left(\xi^{2}+2\xi +1 \right) \Norm{f(y)}_{2}^{2}\\
&\leq (1+3\xi) \Norm{f(y)}_2^2. \quad (\text{since }\xi<1) 
\end{align*}

Next we prove the lower bound. Let $I = \{i: y_i z_i\leq 0\}$, $J_1 = \{i\in I: |y_i| \leq |z_i|\}$ and $J_2 = \{i\in I: |z_i| < |y_i|\leq \zeta^{-1}|z_i|\}$ for some $\zeta<1$ to be determined. Then
\begin{align*}
\Norm{f(y+z)}_2^2 &= \sum_{i\in J_1} f(y_i+z_i)^2 + \sum_{i\in J_2} f(y_i+z_i)^2  + \sum_{i\in I\setminus (J_1\cup J_2)} f(y_i+z_i)^2 + \sum_{i\not\in I} f(y_i+z_i)^2 \\
&\geq \sum_{i\in I\setminus (J_1\cup J_2)} f(y_i + z_i)^2 + \sum_{i\not\in I} f(y_i)^2.
\end{align*}
When $i\in I\setminus (J_1\cup J_2)$, we have $|z_i|\leq \zeta|y_i|$.  It then follows that
\[
\log(|y_i+z_i|+1) \geq \log((1-\zeta)|y_i|+1)\geq (1-\zeta)\log(|y_i|+1),
\]
where, for the last inequality, one can easily verify that $h_\eps(x) = \frac{\log(1+(1-\eps)x)}{\log(1+x)}$ is increasing on $[0,\infty)$ and $\lim_{x\to 0^+} h_\eps(x) = 1-\eps$. Hence
\[
\sum_i f(y_i+z_i)^2 \geq (1-\zeta)^2 \sum_{i\in I\setminus (J_1\cup J_2)} f(y_i)^2 + \sum_{i\not\in I} f(y_i)^2 \geq (1-\zeta)^2\sum_{i\not\in J_1\cup J_2} f(y_i)^2.
\]
Now, note that
\[
\sum_{i\in J_1} f(y_i)^2 \leq \sum_{i\in J_1} f(z_i)^2 \leq \|f(z)\|_2^2 \leq \xi^2 \Norm{f(y)}_2^2
\]
and (using Proposition~\ref{lem:log_additive})
\[
\sum_{i\in J_2} f(y_i)^2 \leq \zeta^{-2} \sum_{i\in J_1} f(z_i)^2 \leq \zeta^{-2} \|f(z)\|_2^2 \leq (\zeta^{-1}\xi)^2 \Norm{f(y)}_2^2.
\]
It follows that
\begin{align*}
\sum_i f(y_i+z_i)^2 &\geq (1-\zeta)^2\left(\Norm{f(y)}_2^2 - \xi^2\Norm{f(y)}_2^2 - (\zeta^{-1}\xi)^2\Norm{f(y)}_2^2 \right) \\
&= (1-\zeta)^2(1-\xi^2-(\zeta^{-1}\xi)^2)\Norm{f(y)}_2^2.
\end{align*}
Choosing $\zeta = (\xi^2/(1-\xi^2))^{1/3}$ maximizes the right-hand side, yielding
\[
\Norm{f(y+z)}_2^2\geq (1-3\xi^{2/3})\Norm{f(y)}_2^2. \qedhere
\]
\end{proof}

\section{Obtaining an Overestimate $\widehat{M}$}\label{sec:1-pass-tractable}
In this subsection we verify that $g(x) = \ln^2(1+\eta x)$ is slow-jumping, slow-dropping, and predictable, where the three properties are defined in~\cite{BCWY16}.

To show that $g$ is slow-jumping, we shall verify that for any $\alpha > 0$, $g(y) \leq \lfloor \frac{y}{x}\rfloor^{2+\alpha} x^\alpha g(x)$ for all $x<y$, whenever $y$ is sufficiently large. (i) When $x \geq y/2$, it suffices to show that $g(y)\leq x^\alpha g(x)$. Since $g(x)$ is increasing, it reduces to showing $g(y)\leq (y/2)^\alpha g(y/2)$. This clearly holds for all large $y$ because one can easily check that $\ln(1+y) \leq 2\ln(1+\frac{y}{2})$ when $y>0$. (ii) When $x < y/2$, we shall show that $g(y)\leq (\frac{y}{x}-1)^{2+\alpha}x^\alpha g(x)$, i.e., $g(y)\leq (\frac{y-x}{x})^2 (y-x)^{\alpha} g(x)$. Since $x<y/2$, we have $y-x\geq y/2$ and thus it suffices to show that $g(y) \leq \frac{1}{4}(\frac{y}{x})^2(\frac{y}{2})^\alpha g(x)$, and for large $y$ that $\frac{g(y)}{y^2}\leq \frac{g(x)}{x^2}$, which can be easily verified. This concludes the proof that $g$ is slow-jumping.

To show that $g$ is slow-dropping, we shall verify that for any $\alpha>0$ it holds that $g(y)\geq g(x)/x^\alpha$ for all $x<y$ whenever $y$ is sufficiently large. This holds obviously because $g(x)$ is increasing.

To show that $g$ is predictable, we shall verify that for any $\gamma \in (0,1)$ and subpolynomial $\eps(x)$, it holds that $g(y)\geq x^{-\gamma}g(x)$ for all sufficiently large $x$ and all $y\in [1,x^{1-\gamma}]$ such that $g(x+y)>(1+\eps(x))g(x)$. This holds automatically because $g(2x)/g(x)\to 1$ as $x\to\infty$ and thus for any given $\eps(x)$,when $x$ is sufficiently large, it  would not hold that $g(x+y)>(1+\eps(x))g(x)$ for $y\in [1,x]$.

\section{Proof of Theorem~\ref{thm:low-rank-raw}}\label{sec:proof of low rank raw}
\begin{proof}
For notational convenience, let $G = f(A)$. Let $S$ be a random sample of $s$ rows chosen from a distribution that satisfies \eqref{eqn:sampling_distribution}. We can write the $i$-th sample as $G_i+E_i$ for some error vector $E_i$. Consider the singular value decomposition of $G = \sum_t \sigma_t u_t v_t^\top$.

For each $t$, we define a random vector
\begin{equation*}
w_t = \frac{1}{s}\sum_{i\in S}\frac{(u_t)_i}{p_i}(G_i + E_i).
\end{equation*}
Note that $S$ in general consists of sampled columns of $f(A)$ with noise. The vectors $w_t$ are clearly in the subspace generated by $S$. We first compute $\E w_t$. We can view $w_t$ as the average of $s$ i.i.d. random variables $X_1,\dots,X_s$, where each $X_j$ has the following distribution:
\[
X_j=\frac{(u_t)_i}{p_i}(G_i+E_i)\text{ with probability }p_i, \quad  i=1,2,\dots n.
\]
Taking expectations,
\[
    \E X_j = \sum_{i=1}^n \frac{(u_t)_i}{p_i}(G_i + E_i)p_i = u_t^\top (G+E) = \sigma_t v_t^\top + u_t^\top E
\]
Hence
\[
\E w_t = \E X_j = \sigma_t v_t^\top + u_t^\top E
\]
and
\[
\Norm{\E X_j}_2^2 = \sigma_t^2 + 2\langle \sigma_t v_t^\top, u_t^\top E\rangle + \Norm{u_t^\top E}_2^2 \leq \sigma_t^2 + 2\langle \sigma_t v_t^\top, u_t^\top E\rangle + \Norm{E}_2^2.
\]
We also calculate that
\begin{align*}
\E \Norm{X_j}_2^2 &= \sum_i \frac{(u_t)_i^2}{p_i^2} \Norm{G_i+E_i}_2^2 \cdot p_i  \\
                  &\leq \sum_i \frac{(u_t)_i^2}{p_i} (\Norm{G_i}_2 + \Norm{E_i}_2)^2 \\
                  &\leq \sum_i (u_t)_i^2 \frac{\norm{G}_F^2}{c\Norm{G_i}_2^2} (1+\gamma)^2\Norm{G_i}_2^2 \\
                  &= \frac{(1+\gamma)^2}{c}\Norm{G}_F^2,
\end{align*}
where we used the assumption \eqref{eqn:sampling_distribution} in the third line and the fact that $\norm{u_t}_2=1$ in the last line. It follows that
\begin{align*}
\E\Norm{w_t}_2^2 &= \E\Norm{\frac{1}{s}\sum_j X_j}_2^2 = \frac{1}{s}\sum_j \E\Norm{X_j}_2^2 + \frac{1}{s^2}\sum_{j\neq \ell} \langle \E X_j,\E X_\ell\rangle \\
&\leq \frac{(1+\gamma)^2}{sc}\Norm{G}_F^2 + \frac{s(s-1)}{s^2} \left(\sigma_t^2 + 2\langle \sigma_t v_t^\top, u_t^\top E\rangle + \Norm{E}_2^2\right),
\end{align*}
and thus
\begin{equation}\label{eqn:low-rank-aux1}
\begin{aligned}
\E \Norm{w_t-\sigma_t v_t^\top}_2^2  &= \E \Norm{w_t}_2^2 - 2\langle \E w_t,\sigma_t v_t^\top\rangle +  \sigma_t^2  \\
& \leq \frac{(1+\gamma)^2}{sc}\Norm{G}_F^2 + \sigma_t^2 + 2\langle \sigma_t v_t^\top, u_t^\top E\rangle + \Norm{E}_2^2  - 2\sigma_t^2 - 2\langle u_t^T E,\sigma_t v_t^\top\rangle +  \sigma_t^2 \\
& = \frac{(1+\gamma)^2}{sc}\Norm{G}_F^2.
\end{aligned}
\end{equation}
If $w_t$ were exactly equal to $\sigma_t v_t^\top$ (instead of just in expectation), we would have
\[
G\sum_{t=1}^k v_t v_t^\top = G \sum_{t=1}^k w_t^\top w_t,
\]
which would be sufficient to prove the theorem. We wish to carry this out approximately. To this end, define $\hat y_t =\frac{1}{\sigma_t}w_t^\top$ for $t = 1,2,\dots,s$ and let $V_1 = \spn(\hat y_1, \hat y_2,\dots, \hat y_s) \subseteq V$. Let $y_1,y_2,\dots,y_n$ be an orthonormal basis of $\R^n$ with $V_1 = \spn(y_1,y_2,\dots,y_l)$, where $l=\dim(V_1)$. Let
\begin{equation*}
B=\sum_{t=1}^l G y_t y_t^\top \quad\mbox{and} \quad\hat B=\sum_{t=1}^k G v_t\hat y_t^\top.
\end{equation*}
The matrix $B$ will be our candidate approximation to $G$ in the span of $S$. We shall bound its error using $\hat B$. Note that for any $i \leq k$ and $j > l$, we have $({\hat y_i})^\top y_j = 0$. Thus,
\begin{equation}\label{eqn:B_and_B_hat}
\Norm{G-B}_F^2 =\sum_{i=1}^n \Norm{(G-B)y^{(i)}}_2^2 
 = \sum_{i=l+1}^n \Norm{G y^{(i)}}_2^2 
 = \sum_{i=l+1}^n \Norm{(G-\hat B)y^{(i)}}_2^2 
 \leq \Norm{G -\hat B}_F^2.
\end{equation}
Also,
\[
\norm{G-\hat B}_F^2 = \sum_{i=1}^n \Norm{u_i^\top(G-\hat B)}_2^2 
= \sum_{i=1}^k \Norm{\sigma_i v_i^\top - w_i}_2^2 + \sum_{i=k+1}^n \sigma_i^2
\]
Taking expectations and using~\eqref{eqn:low-rank-aux1}, we obtain that
\begin{equation}\label{eqn:low-rank-aux2}
\E \Norm{G-\hat B}_F^2 \leq \sum_{i=k+1}^n \sigma_i^2 + \frac{k(1+\gamma)^2}{sc}\Norm{G}_F^2.
\end{equation}
Note that $\hat B$ is of rank at most $k$ and $D_k$ is the best rank-$k$ approximation to $G$. We have
\[
\Norm{G-\hat B}_F^2\geq \Norm{G-D_k}_F^2=\sum_{i=k+1}^n\sigma_i^2
\]
Thus $\|G-\hat B\|_F^2-\|G-D_k\|_F^2$ is a non-negative random variable. It follows from \eqref{eqn:low-rank-aux2} that
\[
\Pr\left\{ \Norm{G-\hat B}_F^2 - \Norm{G-D_k}_F^2\geq \frac{10k(1+\gamma)^2}{sc}\Norm{G}_F^2\right\} \leq \frac{1}{10}.
\]
The result follows from \eqref{eqn:B_and_B_hat} and the fact that $\Norm{E}_F^2\leq \gamma\Norm{G}_F^2$.
\end{proof}

\section{Proof of Theorem~\ref{thm:linear-regression}} \label{sec:regression_proof}

By Theorem~\ref{thm:main-sampling}, for every $i\in [s]$, there exists $j(i)$ such that $h_i=(f(A)_{j(i)},b_{j(i)})+F_{j(i)}$, where $F_i=\frac{E_i}{\sqrt{sp_i}}$. We define a new matrix $S$ such that in the $i$-th row of $S$, $S_{i,j(i)}=\frac{1}{\sqrt{sp_{j(i)}}}$ and the other entries are zero. By Theorem~\ref{thm:concatenated_sampling}, we have that the row-sampling probability we use is a $(1\pm O(\eps))$ approximation to the true sampling probability. Therefore, we define matrix $\hat S$ such that in the $i$-th row of $\hat S$, $\hat S_{i,j(i)}=\frac{1}{\sqrt{s\hat p_{j(i)}}}$ and the other entries are zero, and matrix $\hat F$ is such that $\hat F_i = \frac{E_i}{\sqrt{s \hat p_i}}$.  Then, we find that $\hat S\begin{pmatrix}f(A) & b\end{pmatrix}+\hat F=T$.

\begin{proof}
For notational convenience, we let $G=f(A)$ with singular value decomposition $G=U\Sigma V^\top$.
We shall show that $\Norm{I_d - (\hat SU)^\top (\hat SU)}_2$ is small, for which we first show $\Norm{I_d - (SU)^\top (SU)}_2$ is small.

Let $X_i=I_d-Y_i^TY_i$ and $Y_i = \frac{U_{j(i)}}{\sqrt{p_{j(i)}}}$, where $U_t$ is the $t$-th row of $U$,  which means that the $j(i)$-th row of $M$ is chosen in the $i$-th trial. Since
\[
\E (X_i) = I_d - \E (Y_i^TY_i) = I_d - \sum_{t=1}^n p_t \frac{U_t^T}{\sqrt{p_t}} \frac{U_t}{\sqrt{p_t}}=I_d - \sum_{t=1}^n U_t^T U_t = 0,
\]
we can apply Lemma~\ref{lem:matrix-bound} to $X_1,\dots,X_s$, for which we need to upper bound $\norm{X_i}_2$ and $\norm{\E(X_i^2)}_2$.

We first bound $\norm{X_i}_2$.
\[
\Norm{X_i}_2 = \Norm{I_d - Y_i^\top Y_i}_2 \leq 1+\frac{\Norm{U_i^\top U_i}_2}{p_i} \leq 1+\frac{\Norm{U_i}_2^2}{c\Norm{G_i}_2^2}\Norm{G}_F^2 \leq 1+\frac{\sigma_1^2+\cdots +\sigma_d^2}{c\sigma_d^2}\leq 1+\frac{d\kappa^2}{c},
\]
where $\sigma_1\geq\cdots\geq\sigma_d$ are the singular values of $G$, and in the penultimate inequality we use the fact that
$\Norm{G_i}_2=\Norm{U_i\Sigma V^T}_2=\Norm{U_i\Sigma}_2 \geq \sigma_d\Norm{U_i}_2$.

Next, we bound $\Norm{\E (X_i^2)}_2$. Observe that
\begin{align*}
\E (X_i^2+ I_d) &= I_d +\E (I_d - Y_i^\top Y_i)(I_d - Y_i^\top Y_i)=I_d + \E (I_d - 2Y_i^\top Y_i+Y_i^\top Y_iY_i^\top Y_i)\\
&= 2I_d-\E (Y_i^\top Y_i)+\E (Y_i^\top Y_i\Norm{Y_i}_2^2)=\E \left(\frac{\Norm{U_{j(i)}}_2^2}{p_{j(i)}}Y_i^\top Y_i\right),
\end{align*}
and thus
\[
\Norm{\E (X_i^2+I_d)}_2=\Norm{\E \left(\frac{\Norm{U_{j(i)}}_2^2}{p_{j(i)}} Y_i^\top Y_i \right)}_2 \leq \Norm{\E \left(\frac{\Norm{U_i}_2^2}{c\Norm{G_i}_2^2}\Norm{G}_F^2Y_i^\top Y_i \right)}_2 \leq \Norm{\E \left(\frac{d\kappa^2}{c}Y_i^\top Y_i\right)}_2 = \frac{d\kappa^2}{c}.
\]
It follows immediately from the triangle inequality that
\begin{align*}
\Norm{\E X_i^2}_2 \leq \Norm{\E (X_i^2+I_d)}_2 + \Norm{I_d}_2 \leq \frac{d\kappa^2}{c}+1.
\end{align*}
Invoking Lemma~\ref{lem:matrix-bound}, for
\begin{align*}
W=\frac{1}{s}\sum_{i=1}^s X_i=I_d-\frac{1}{s}\sum_{i=1}^s Y_i^\top Y_i=I_d-(SU)^\top (SU),
\end{align*}
and $\rho = \sigma^2 = 1+d\kappa^2/c$, we have that
\[
\Pr\left\{\Norm{I_d -(SU)^\top(SU)}_2 > \eps\right\} \leq 2d\exp\left(-\frac{\eps^2 s}{\sigma^2 + \rho\eps/3}\right) \leq 2d\exp\left( -\frac{\eps^2 s}{2d\kappa^2/c} \right) \leq \delta
\]
by our choice of $s$. Equivalently, with probability at least $1-\delta$, it holds that $\Norm{I_d -(SU)^\top(SU)}_2 \leq \eps$, which implies that $\Norm{SGx}_2=(1\pm\eps)\Norm{Gx}_2$ for all $x\in\R^d$.  We condition on this event in the rest of the proof. 

Second, we show that the error between$\Norm{I_d - (SU)^\top (SU)}_2$ and $\Norm{I_d - (\hat SU)^\top(\hat SU)}_2$ is small. 
\begin{align*}
    \Norm{I_d - (\hat SU)^\top(\hat SU)}_2 &\leq \Norm{I_d - (SU)^\top (SU)}_2 + \Norm{(\hat SU)^\top(\hat SU)-(SU)^\top(SU)}_2\\ &\leq \eps + \Norm{(\hat SU)^\top(\hat SU)-(SU)^\top(SU)}_2.
\end{align*}

Observe that $(\hat SU)^\top(\hat SU)=\sum_{i=1}^s\frac{U_{j(i)}^\top U_{j(i)}}{s\hat p_{j(i)}}=\sum_{i=1}^s \frac{U_{j(i)}^\top U_{j(i)}}{(1\pm O(\eps))sp_{j(i)}} = \frac{(SU)^\top(SU)}{1\pm O(\eps)}$ and thus
\[
\Norm{(\hat SU)^\top(\hat SU)-(SU)^\top(SU)}_2 = O(\eps)\Norm{(SU)^\top(SU)}_2.
\]

We have proved that $\Norm{I_d - (SU)^\top (SU)}_2 \leq \eps$, so we have  $\Norm{I_d - (\hat SU)^\top(\hat SU)}_2 \leq \eps + O(\eps)(1+\eps) = O(\eps)$. 
By rescaling $\eps'$, we can assume that $\Norm{I_d - (\hat SU)^\top (\hat SU)}_2 \leq \eps$.

Now consider the subspace spanned by the columns of $M$ together with $b$. For any vector $y=Gx-b$, $\Norm{\hat Sy}_2=(1\pm\eps)\Norm{y}_2$. Recall that we have defined $\hat F_i=\frac{E_i}{\sqrt{s \hat p_i}}$, where $\hat F_i$ and $E_i$ are the corresponding $i$-th row of $F$ and $E$. Let $\hat F^{(1)}$ be the first $d$ columns of $\hat F$ and $\hat F^{(2)}$ be the last column of $\hat F$. Hence, the original linear regression problem can be written as $\min \Norm{(\hat SG+\hat F^{(1)})x-(\hat Sb+\hat F^{(2)})}_2$. 

Note that $\tilde{x} = \arg\min_x \Norm{(\hat SG+\hat F^{(1)})x-(\hat Sb+\hat F^{(2)})}_2$ satisfies 
\begin{align*}
    \min_{\tilde x}\Norm{(\hat SG+\hat F^{(1)})\tilde{x}-(\hat Sb+\hat F^{(2)})}_2 
    &\leq \Norm{(\hat SG+\hat F^{(1)})x^*-(\hat Sb+\hat F^{(2)})}_2\\
    &\leq \Norm{\hat S(Gx^*-b)}_2+\Norm{\hat F^{(1)}x^*-\hat F^{(2)}}_2\\
    &\leq  (1+\eps)\Norm{Gx^*-b}_2+\Norm{\hat F}_2\sqrt{\Norm{x^*}_2^2+1},
\end{align*}

where the third inequality holds because $\hat S$ is a subspace embedding for the column space of $G$ together with $b$ and $x^*=\arg \min_{x \in \mathbb{R}^d} \Norm{Gx-b}_2$.

Now, consider the upper bound on $\Norm{\hat F}_2$. Since 
\[
\Norm{\hat F_i}_2^2 = \frac{\Norm{ E_i}_2^2}{s\hat p_i}\leq \gamma^2 \frac{\Norm{G_i}_2^2 + |b_i|^2}{sc(\Norm{G_i}_2^2+\Abs{b_i}^2)}(\Norm{G}_F^2+\Norm{b}_2^2)\leq \frac{\gamma^2}{sc}(\Norm{G}_F^2+\Norm{b}_2^2)
\]
and
\[
\Norm{x^*}_2 = \Norm{G^\dagger b}_2\leq \frac{\Norm{b}_2}{\sigma_{\min}(G)},
\]
we have that
\begin{align*}
    \min_{\tilde x}\Norm{(\hat SG+\hat F^{(1)})\tilde{x}-(\hat Sb+\hat F^{(2)})}_2 &\leq (1+\eps)\Norm{Gx^*-b}_2 + \Norm{\hat F}_2\sqrt{\Norm{x^*}_2^2+1}\\
    &\leq (1+\eps)\Norm{Gx^*-b}_2 + \frac{\gamma}{\sqrt{c}}\sqrt{\Norm{G}_F^2+\Norm{b}_2^2}\cdot\sqrt{\frac{\Norm{b}_2^2}{\sigma_{\min}^2(G)}+1}\\
    &\leq (1+\eps)\Norm{Gx^*-b}_2 + \frac{\gamma}{\sqrt{c}}\left(\sqrt{\Norm{G}_F^2+\Norm{b}_2^2}+\sqrt{d+\frac{\Norm{b}_2^2}{\Norm{G}_2^2}}\kappa\Norm{b}_2\right).
\end{align*}

By our assumption, $c=1-O(\eps)$ and $\gamma=O(\eps)$. Rescaling $\eps$ gives the claimed bound, completing the proof of Theorem~\ref{thm:linear-regression}.
\end{proof}

\end{document}